\documentclass{article}

\usepackage{arxiv}

\usepackage[utf8]{inputenc} % allow utf-8 input
\usepackage[T1]{fontenc}    % use 8-bit T1 fonts
\usepackage{hyperref}       % hyperlinks
\usepackage{url}            % simple URL typesetting
\usepackage{booktabs}       % professional-quality tables
\usepackage{amsfonts}       % blackboard math symbols
\usepackage{nicefrac}       % compact symbols for 1/2, etc.
\usepackage{microtype}      % microtypography
\usepackage{lipsum}		% Can be removed after putting your text content
\usepackage{graphicx}
\usepackage{natbib}
\usepackage{doi}
\usepackage{amsmath}
\usepackage{amssymb}
\usepackage{xcolor}
\usepackage{mathtools}
\usepackage{amsthm}
\usepackage{enumitem}
\usepackage{svg}
\usepackage{algorithm2e}
\usepackage{multicol, multirow}
\usepackage{arydshln}
\usepackage{booktabs}
\usepackage{caption,subcaption}

% define some commands.

\newcommand{\qref}[1]{Eq.~(\ref{#1})}

\title{
Unbalanced optimal transport for stochastic particle tracking
}

\newtheorem{definition}{Definition}
\newtheorem{remark}{Remark}

\newtheorem{lemma}{Lemma}
\newtheorem{theorem}{Theorem}[section]

%\date{September 9, 1985}	% Here you can change the date presented in the paper title
%\date{} 					% Or removing it

\author{{\includegraphics[scale=0.06]{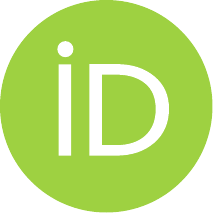}\hspace{1mm}Kairui~Hao} \\
	School of Mechanical Engineering\\
	Purdue University\\
	West Lafayette, IN \\
	\texttt{hao55@purdue.edu} 
 \\ 
	\And
 {\includegraphics[scale=0.06]{orcid.pdf}\hspace{1mm}Atharva Hans} \\ 
	School of Mechanical Engineering\\
	Purdue University\\
	West Lafayette, IN \\
	\texttt{hans1@purdue.edu} \\
	\And
{\includegraphics[scale=0.06]{orcid.pdf}\hspace{1mm}Pavlos Vlachos} \\ 
	School of Mechanical Engineering\\
	Purdue University\\
	West Lafayette, IN \\
	\texttt{pvlachos@purdue.edu} \\
	\And
 {\includegraphics[scale=0.06]{orcid.pdf}\hspace{1mm}Ilias~Bilionis}\thanks{Corresponding author.} \\ 
	School of Mechanical Engineering\\
	Purdue University\\
	West Lafayette, IN \\
	\texttt{ibilion@purdue.edu} \\
}

% Uncomment to remove the date
%\date{}

% Uncomment to override  the `A preprint' in the header
%\renewcommand{\headeright}{Technical Report}
%\renewcommand{\undertitle}{Technical Report}
% \renewcommand{\shorttitle}{\textit{arXiv} Template}

%%% Add PDF metadata to help others organize their library
%%% Once the PDF is generated, you can check the metadata with
%%% $ pdfinfo template.pdf
\hypersetup{
pdftitle={},
pdfsubject={q-bio.NC, q-bio.QM},
pdfauthor={David S.~Hippocampus, Elias D.~Striatum},
pdfkeywords={First keyword, Second keyword, More},
}

\begin{document}
\maketitle

\begin{abstract}
Non-invasive flow measurement techniques, such as particle tracking velocimetry, resolve 3D velocity fields by pairing tracer particle positions in successive time steps. 
These trajectories are crucial for evaluating physical quantities like vorticity, shear stress, pressure, and coherent structures.
Traditional approaches deterministically reconstruct particle positions and extract particle tracks using tracking algorithms.
However, reliable track estimation is challenging due to measurement noise caused by high particle density, particle image overlap, and falsely reconstructed 3D particle positions. 
To overcome this challenge, probabilistic approaches quantify the epistemic uncertainty in particle positions, typically using a Gaussian probability distribution.
However, the standard deterministic tracking algorithms relying on nearest-neighbor search do not directly extend to the probabilistic setting.
Moreover, such algorithms do not necessarily find globally consistent solutions robust to reconstruction errors.
This paper aims to develop a globally consistent nearest-neighborhood algorithm that robustly extracts stochastic particle tracks from the reconstructed Gaussian particle distributions in all frames.
Our tracking algorithm relies on the unbalanced optimal transport theory in the metric space of Gaussian measures.
Specifically, we optimize a binary transport plan for efficiently moving the Gaussian distributions of reconstructed particle positions between time frames.
We achieve this by computing the partial Wasserstein distance in the metric space of Gaussian measures.
Our tracking algorithm is robust to position reconstruction errors since it automatically detects the number of particles that should be matched through hyperparameter optimization.
Notably, our tracking algorithm also readily applies to the standard deterministic PTV case.
Finally, we validate our method using an in vitro flow experiment using a 3D-printed cerebral aneurysm.
\end{abstract}

% keywords can be removed
\keywords{
Particle tracking velocimetry
\and
Robust particle tracking
\and 
Uncertainty quantification and propagation
\and 
Unbalanced optimal transport
\and
Relaxed binary linear programming
}

\section{Introduction}

Particle tracking velocimetry (PTV) is a fluid velocity field measurement technique that works by tracking the tracer particles
\citep{maas1993particle,malik1993particle}.
Multiple cameras record the three-dimensional motion of tracer particles in two-dimensional images.
Then one uses a regression approach to build forward measurement models that map the physical space particles to the images \citep{maas1993particle,malik1993particle}.
The two-dimensional projected particle images can be mapped back to the physical space using triangulation \cite{maas1993particle,wieneke2012iterative}.
Once physical space particle positions are reconstructed at each recorded frame, one uses tracking algorithms to obtain Lagrangian tracks \cite{malik1993particle}.
Finally, the Eulerian velocity and pressure fields can be estimated from these Lagrangian tracks \citep{neeteson2016pressure, zhang2020using,virant19973d,hagemeier2015comparative}.

Two crucial steps in PTV are the
reconstruction of physical space particle positions from recorded images and the extraction of Lagrangian tracks from the reconstructed particle positions.
One of the standard reconstruction methods is
Iterative Particle Reconstruction \citep{wieneke2012iterative, jahn2021}. 
To extract Lagrangian tracks, one uses tracking algorithms to identify the most probable tracks.
The classic approach is the nearest neighbor searching (NNS) algorithm \cite{malik1993particle} that finds the nearest particle to an individual particle or predictor location.
Several subsequent improvements to the NNS algorithm have been developed.
\citet{dracos1996particle} proposed to penalize large acceleration.
\citet{baek1996new} used iterative estimation of match probability and no-match probability.
\citet{okamoto1995new} developed a spring model technique to match particle clusters.
The method in 
\cite{guezennec1994algorithms} finds the most likely particle tracks by minimizing a penalty function associated with each possible track.
The objective is to use path coherence, such as smoothness of position and velocity to identify particle tracks.
\cite{li2008multi} used a regression method to predict future particle positions and the developed method is more robust to noisy input particle positions.
\citet{mikheev2008enhanced} added a term that accounts for particle diameters for enhancing pairwise matching.
\cite{cardwell2011multi} introduced a multi-parametric particle-pairing algorithm.
\cite{guo2019microscale} extended this method to a generalized multi-parametric pairing algorithm that uses more features, such as diameter, peak intensity, size, aspect ratio, or fluorescence color.
\cite{cierpka2013higher} uses a multi-frame high-order approach to pair particles.
Instead of finishing reconstruction and tracking in two steps, the state-of-the-art method, shake-the-box \cite{schanz2016shake}, combines these two steps together to identify the Lagrangian tracks in time progressively.

Despite the great progress in the field, there still remain some open challenges.
First, all current methods are deterministic and cannot account for uncertainties in PTV.
In the reconstruction step, the reconstructed physical space particles are not fully faithful due to camera measurement noise, overlapping particles, and reconstruction algorithm errors.
These errors subsequently propagate to the particle tracking phase.
Deterministic approaches produce a single Lagrangian track, which is incapable of quantifying the errors.
Second, all the tracking algorithms use NNS-type approaches that only utilize local displacement information to match particle pairs. 
This local matching strategy is not necessarily globally consistent since
one particle might be identified as the nearest neighbor particle for multiple particles or predictors.
Hence, further processing is required to identify the unique particle tracks.
Last, it is still challenging for the current tracking algorithms to address the reconstruction errors,
and robustly identify Lagrangian tracks from the potentially erroneous position reconstruction results.

Uncertainty quantification has been studied extensively for particle image velocimetry~\citep{timmins2012method,charonko2013estimation,xue2014particle,sciacchitano2015collaborative,xue2015particle,kahler2016main,
bhattacharya2016stereo,boomsma2016comparative,bhattacharya2018particle,rajendran2021meta}.
Recent work on PTV particle reconstruction methods has also shown a growing interest in uncertainty quantification.
The method can be categorized into non-Bayesian and Bayesian.
In a non-Bayesian approach,
\citet{bhattacharya2020volumetric} quantify the uncertainty in particle image positions and the mapping function coefficients.
The approach relies on the frequentist perspective to find the covariance matrices from the training data.
Subsequently, these uncertainties can be propagated back using first-order Taylor expansion to get the reconstructed particle positions and velocity field.
Our research group's previous work \cite{hansstochastic} employed the Bayesian Volumetric Reconstruction framework to get the reconstructed particle positions.
The methodology first assumes a prior distribution of the position of each particle.
Such a prior distribution quantifies the belief of particle positions before observing images.
In the simplest case, one can choose a uniform distribution as the prior position distribution for each particle.
Next, it uses the calibrated camera measurement mapping function and recorded images to derive a likelihood function.
Finally, the authors use the variational inference \citep{jordan1999introduction,blei2017variational} to update the posterior distribution of the reconstructed particle positions.
These stochastic reconstruction methods estimate the distributions of particle positions using Gaussian measures.
However, to the best of our knowledge, there does not exist a particle tracking algorithm that can extract stochastic Lagrangian tracks from stochastically reconstructed particle positions.
Robustly identifying the stochastic Lagrangian tracks against reconstruction errors is crucial to reconstruct the stochastic Eulerian velocity field using Physics-informed reconstruction algorithms \citep{alberts2023physics, hao2024information}.

Optimal transport (OT) theory \citep{kantorovich2006translocation,  villani2009optimal,villani2021topics} studies the optimal transportation and allocation of resources.
It is a rather mathematical subject \citep{santambrogio2015optimal}, but has been successfully applied to practical problems such as full-waveform inversion \citep{yang2018application}, economics \citep{galichon2018optimal}, data science \citep{peyre2019computational}, and deep learning \citep{arjovsky2017wasserstein}.
The application of OT to PTV is limited.
\citep{agueh2015optimal} and \citep{saumier2015optimal} first brought the OT method to particle image velocimetry.
They studied 2D velocity field reconstruction from images using OT.
However, their OT approach is incompatible with the recent computational optimal transport theory and algorithms, which have progressed significantly.
\citep{chen2023unbalanced} formulated a regularized unbalanced optimal transport (UOT) problem to analyze fluid flows in the brain.
The UOT problem is a continuous variational problem subject to a partial differential equation constraint.
This has a great theoretical interest but is numerically expensive.
A similar work to ours is 
\citep{liang2023recurrent}.
The author introduced a complete pipeline to learn the 3D fluid motion from reconstructed particle positions.
They used neural networks to transform particle sets into feature spaces and solve the entropy-regularized UOT problem in the feature spaces. 
Subsequently, they adaptively retrieved the optimal transport plan for iterative flow updating.
We argue that the most noticeable advantage of OT over NNS is that OT finds particle pairs by combining local and global displacement information.
However, all of the aforementioned OT-based PTV approaches are not able to quantify and propagate uncertainties.
Moreover, it has been shown that OT is not robust to outliers since it uses all possible particle displacements to produce the optimal transport plan \citep{balaji2020robust, mukherjee2021outlier,le2021robust,nietert2022outlier}.
The falsely reconstructed particle positions will trigger OT to replan the transport that accounts for these undesirable outlier particles. 
It is not clear how the aforementioned applications of OT to PTV address the robustness issue when facing reconstruction errors.

The objective of this paper is to develop a globally consistent nearest-neighborhood algorithm that robustly extracts stochastic particle tracks from the reconstructed Gaussian particle distributions in all frames. 
First, we formulate an integer UOT (IUOT) problem, which is a binary linear programming (BLP) problem 
that achieves one-to-one pairing.
Next, we solve the partial Wasserstein distance (PWD) \citep{chapel2020partial} formulation of the IUOT.
We show that PWD is a relaxed version of the BLP that can be solved in polynomial time.
We give the theoretical guarantee that the relaxed BLP attains the same optimal point of BLP.
Our method is the first PTV tracking algorithm to match reconstructed Gaussian position estimation and propagate uncertainties from stochastic particle positions to stochastic Lagrangian tracks.
We achieve this by matching the Gaussian position estimation in the metric space of Gaussian measures, also known as the Wasserstein space.
Our method is the first PTV tracking algorithm that explicitly addresses the tracking robustness when facing falsely reconstructed particle positions.
The algorithm achieves this by automatically detecting the unknown number of particle pairs that should be matched through hyperparameter optimization.

The structure of this paper is as follows.
In section \ref{section:background} we review the Bayesian volumetric reconstruction method and OT/UOT theory.
In section \ref{section:methodology}, we develop our tracking algorithm.
In section \ref{section:examples},
we verify and validate the algorithm with a synthetic 3D Burgers vortex and a cerebral aneurysm flow experimental examples.
In section \ref{section:conclusions}, we conclude the paper.

\section{Background}
\label{section:background}

\subsection{Bayesian volumetric reconstruction in PTV}

We review the Bayesian volumetric reconstruction (BVR) method developed in~\citep{hansstochastic}.
Define $U$ as a 3D flow domain and $r=(x, y, z)\in U$ as a position in the domain. 
BVR formulates a Bayesian inference problem to estimate the posterior distribution of the 3D physical space particle positions given observed camera images.
Two ingredients are needed.
First, a prior distribution of 3D particle positions $p(r_{1:N})$ that quantifies our belief of particle positions before observing the images.
This is the users' choice and can be just a uniform distribution over $U$.
Second, a likelihood function that links the 3D particle positions to the observed images.
Two basic setups are required in the likelihood function:  forward and  measurement models.

\subsubsection{Forward camera model}
To develop a forward model, all cameras are first calibrated to map the 3D physical space particle positions $r$ to 2D image coordinates.
The mapping functions are typically two polynomial functions \citep{soloff1997distortion, wieneke2005stereo} per camera, which map $r$ to horizontal and vertical coordinates in the image, respectively.
The typical choice is two 3rd-order polynomials.
In the multi-index notation, we can write, for the $n$th particle and the $m$th camera, the image horizontal $f_{m1}(r_n)$ and vertical $f_{m2}(r_n)$ coordinate mapping functions
\begin{align*}
    f_{mi}(r_n) = \sum_{\left\vert
    \alpha
    \right
    \vert
    \le3
    }
    w_{mi\alpha}r_n^{\alpha}, 
    \quad
    i=1\ \text{and}\ 2.
\end{align*}
In this calibration step, the polynomial coefficients $w_{mi\alpha}$ are optimized to minimize the distance between the computed image coordinates and the observed image coordinates.

Next, an intensity model describes the projected particle intensity on the image. 
This model assumes the intensity follows a Gaussian kernel function centered in the projected particle coordinate $f_{m}(r_n)=\left(f_{m1}(r_n), f_{m2}(r_n)\right)$.
So, the particle intensity peaks in its projected center and attenuates away.
Three required parameters are center intensity, stretch length, and rotation angle.
For the $n$th particle and $m$th camera, the center intensity of the projected particle image considers the effects of particle radius $\rho_n$, peak intensity $\phi_m$, slope parameter $\alpha_m$ and threshold parameter $\beta_m$.
This can be modeled using the Sigmoid function 
\begin{align*}
    \hat{I}
    \left(
        \rho_n;
        \phi_{m}, \alpha_m, \beta_{m}
    \right)
    =
    a+b\operatorname{Sigmoid}\left(
    \alpha_m(\rho_n-\beta_m)
    \right),
\end{align*}
where $a$ and $b$ are selected such that $
\lim_{\rho_n \to 0}\hat{I}
    \left(
        \rho_n;
        \phi_{m}, \alpha_m, \beta_{m}
    \right) = 0
$ and 
$
\lim_{\rho_n \to \infty} \hat{I}
    \left(
        \rho_n;
        \phi_{m}, \alpha_m, \beta_{m}
    \right)= \phi_m
$.
The stretch and rotation effects can be modeled using the diagonal scaling and rotation matrices:
\begin{align*}
    S_m
    &=\operatorname{diag}\left(
    \gamma_{m1}, \gamma_{m2}
    \right),
    \\
    Q_m&=
    \begin{bmatrix}
        \cos\left(\theta_m \right)
        &
        -\sin\left(\theta_m \right)
        \\
        \sin\left(\theta_m \right)
        &
        \cos\left(\theta_m \right)
    \end{bmatrix}.
\end{align*}
The projected $n$th particle intensity on $(i, j)$th pixel $p_{mij}$ of the $m$th camera  is
\begin{align*}
I_{mij}\left(r_n\right)
=
    \hat{I}
    \left(
        \rho_n;
        \phi_{m}, \alpha_m, \beta_{m}
    \right)
    \exp\left\{
        -\frac{1}{2}
        (p_{mij}-f_m(r_n))^T
        Q_m S_m Q_m^{T}
        (p_{mij}-f_m(r_n))
    \right\}.
\end{align*}
The total projected intensity on $(i, j)$th pixel of the $m$th camera  is the sum of $N$ individual projected particle intensities at this pixel:
\begin{align*}
    I_{mij}\left(r_{1:N}\right) =
    \sum_{n=1}^N
     I_{mij}(r_n).
\end{align*}

\subsubsection{Measurement function}

The measurement function describes how the observed $m$th grayscale camera image $y_m$ is generated by the projected intensity $I_{mij}\left(r_{1:N}\right)$.
Without loss of generality, we normalize $y_m$ between 0 and 1.
Then we make the conditionally independent assumption, and the likelihood of the $m$th camera follows
\begin{align*}
    p\left(y_m\vert r_{1:N}\right)
    =
    \prod_{i}^{I}\prod_{j}^J
    p\left(
    y_{mij}|r_{1:N}
    \right),
\end{align*}
where each pixel grayscale color follows the truncated Normal distribution with the standard deviation $\delta_m$
\begin{align*}
    p\left(
    y_{mij}|r_{1:N}
    \right)
    \sim 
    \operatorname{TruncatedNormal}_{(0, 1)}\left(
        I_{mij}\left(r_{1:N}
        \right),
        \delta_{m}^2
    \right).
\end{align*}
To emphasize the parameters defined in the forward camera model $I_{mij}\left(r_{1:N}\right)$
and the measurement standard deviations, 
we collectively write them as $
\lambda=
\left\{
\rho,
\phi,
\alpha,
\beta,
\gamma,
\theta, 
\delta
\right\}
$.
Then the likelihood for all $M$ number of cameras is the product of the likelihoods of each camera:
\begin{align*}
    p_{\lambda}\left(y_{1:M}\vert r_{1:N}\right)
    =
    \prod_{m=1}^M
    p_{\lambda}\left(y_m\vert r_{1:N}\right).
\end{align*}

\subsubsection{Variational inference}

The posterior distribution of 3D particle positions quantifies the epistemic uncertainty after observing the camera images.
Using Bayes' rule,  it is
\begin{align*}
    p_{\lambda}(r_{1:N}\vert y_{1:M})
    \propto
    p_{\lambda}\left(y_{1:M}\vert r_{1:N}\right)
    p(r_{1:N}).
\end{align*}
Jointly finding the posterior distribution and the optimal likelihood parameter $\lambda$ analytically is intractable.
So BVR uses variational inference \citep{jordan1999introduction} and the training technique in variational auto-encoder \citep{kingma2013auto} to approximate the posterior distribution while finding the optimal parameter. 

This is achieved by maximizing the evidence lower bound (ELBO) over the likelihood parameter $\lambda$ and a family of parameterized distribution $q_{\psi}(r_{1:N})$ to approximate the posterior $ p_r(r_{1:N}\vert y_{1:M})$.
Mathematically, the ELBO is defined as
\begin{align*}
    \operatorname{ELBO}\left(\psi, \lambda\vert y_{1:M}\right)
    =
    \int 
    \log\left\{
    \frac{
     p_{\lambda}\left(y_{1:M}\vert r_{1:N}\right)
    p(r_{1:N})
    }{
    q_{\psi}(r_{1:N})
    }
    \right\}
    q_{\psi}(r_{1:N})\ dr_{1:N}.
\end{align*}

To improve the computational efficiency, BVR chooses the diagonal multivariate normal distribution for each particle position as the guide:
\begin{align*}
    q_{\phi}(r_{1:N})
    =
    \prod_{n=1}^N
    \mathcal{N}
    \left(
    r_n\vert
    \mu_n, \operatorname{diag}
    \left(
    \sigma^2_{n}
    \right)
    \right),
\end{align*}
where $\mu_n$ and $\sigma_n$ are the three-dimensional mean and standard deviation vectors for the $n$th particle, respectively.
To further improve the reconstruction accuracy, BVR adds a penalty term to the ELBO.
This significantly helps to escape local minimums in the optimization.
The parameters $\lambda$ defined in the forward camera model and measurement function are also optimized while computing the posterior distribution.
The reader can find the details in~\citep{hansstochastic}.

\subsection{Balanced and unbalanced optimal transport}

The optimal transport problem originates from the question of how to best move piles of sand to fill up holes of the same total volume.
Monge \citep{villani2009optimal, peyre2019computational}
first formulated the problem as finding the optimal transport plan.
Formally, let $X$ and $Y$ be two metric spaces, $\mu$ and $\nu$ be two probability measures on $X$ and $Y$, respectively, and $c: X \times Y \to [0, \infty]$ be a cost function of moving a mass from a point $x$ to $y$. Monge's formulation of the optimal transportation problem is to find a pushforward transport map $T: X\to Y$ that realizes the infimum:
\begin{align*}
    \inf_{T} \left\{
        \int_{X} c\left(x, T(x)\right)d \mu(x)
        \middle\vert
        T_{\#}(\mu)=\nu
    \right\}.
\end{align*}
The constraint $T_{\#}(\mu)=\nu$ means $\nu$ is the pushforward measure of $\mu$ under the transport map $T$ \citep{stein2009real}.
This means, for any Borel set $A$ in $Y$, we have
$
\nu(A) = \mu(T^{-1}(A))
$.
However, Monge's formulation exhibits several difficulties. The objective functional and constraint are not convex, and the transport map might not exist \citep{santambrogio2015optimal}.

Kantorovich's formulation of the optimal transportation problem 
relaxes Monge's formulation by allowing mass split and has a probabilistic interpretation.
Formally, let $\Gamma(\mu, \nu)$ denote the collection of all probability measures on $X\times Y$ with marginals $\mu$ on $X$ and $\nu$ on $Y$. 
We define the projection operators:
\begin{align}
    P^{X}_{\#} r
    =
    \int_{Y} \gamma(x, y) dy
    \quad
    \text{and}
    \quad
    P^{Y}_{\#} r
    =
    \int_{X} \gamma(x, y) dx. \label{projections}
\end{align}
Then $\mu = P^{X}_{\#} r$ and $\nu = P^{Y}_{\#} r$.

Kantorovich's formulation aims to find a coupling measure $\gamma$ that realizes the infimum:
\begin{align}
    \inf_{\gamma} \left\{
    \int_{X\times Y} c(x, y) d\gamma(x, y)
    \middle\vert 
    \gamma\in \Gamma(\mu, \nu)
    \right\},
    \label{kantor}
\end{align}
This coupling measure has the interpretation of a transport plan~\cite{arjovsky2017wasserstein}. 

A special but useful case is when both $\mu$ and $\nu$ are discrete unnormalized uniform measures:
\begin{align*}
    \mu(x) = \sum_{i=1}^N \delta_{x_i}(x), \ \text{and}\
    \nu(y) = \sum_{i=1}^N \delta_{y_i}(y).
\end{align*}
In this case, the cost function $c(x,y)$ and the coupling measure $\gamma(x,y)$ have finite-dimensional matrix representations.
Specifically, let $c_{ij}$ and  $\gamma_{ij}$
denote the cost of moving a unit mass and the total mass to be transported from the $i$th source to the $j$th target, respectively.
The objective function can be formulated using the Frobenius product 
$\langle \gamma, c\rangle_F \coloneqq \sum_{i, j} \gamma_{ij}c_{ij}$.
Then one can translate \qref{kantor} into the finite-dimensional linear programming problem:
\begin{equation}
    \begin{split}
        \min_{\gamma}\quad &
        W=
        \langle \gamma, c\rangle_F,
     \\
     s.t. \quad &\sum_{j=1}^N \gamma_{ij} = 1, \ \text{for}\ i=1, \cdots, N,
     \\
      &\sum_{i=1}^N \gamma_{ij} = 1, \ \text{for}\ j=1, \cdots, N,
      \\
     & \gamma_{ij}\ge 0.
    \end{split}
    \label{balanced_ot}
\end{equation}
The first two constraints are a finite-dimensional version of the marginal consistent constraints.
Solving this optimization problem produces an optimal transport plan matrix $\gamma^{\ast}$.
Due to the Birkhoff-von Neumann \citep{jurkat1967term} and the fundamental theorem of linear programming \citep{bertsekas2009convex}, the optimal $\gamma^{\ast}$ only contains 0 and 1.
Computing the optimal transport cost using classic algorithms such as network simplex or interior point scales at least in $O(N^3\log(N))$ \cite{pele2009fast}.
So it becomes quickly computationally prohibitive.
\citep{cuturi2013sinkhorn} revitalized the computational optimal transport by defining an entropy-regularized optimal transport problem.
This problem can be solved by the Sinkhorn algorithm, which scales in $O(N^2)$ \cite{sinkhorn1967diagonal, knight2008sinkhorn, lin2022efficiency}.

One can use \qref{kantor} to define the Wasserstein $p$-distance (with respect to the distance $d(x, y)$)  between two probability measures $\mu$ and $\nu$ with finite $p$-moments using the expectation operator $\mathbb{E}$:
\begin{align*}
    W_p\left(\mu, \nu\right)
    =
    \left(
    \inf_{\gamma\in \Gamma(\mu, \nu)}
    \mathbb{E}_{(x, y)\sim \gamma}
    d(x, y)^p
    \right)^{1/p}.
\end{align*}
Two special cases we will use later are Dirac measures and Gaussian measures with the Euclidean norm $d(x, y)= \left\Vert x-y\right\Vert$.
Specifically, for the Dirac measures, we have
\begin{align*}
     W_2\left(
      \delta_{x},
    \delta_{y}
     \right)^2
     =
     \left\Vert x-y\right\Vert^2,
\end{align*}
and for the Gaussian measures \citep{olkin1982distance}
 \begin{align}
     W_2\left(
      \mathcal{N}(m_1, \Sigma_1),
    \mathcal{N}(m_2, \Sigma_2)
     \right)^2
     =
     \left\Vert
     m_1-m_2
     \right\Vert^2_2
     +
     \operatorname{trace}
     \left(
     \Sigma_1+ \Sigma_2 -
     2(
        \Sigma_2^{1/2}\Sigma_1
        \Sigma_2^{1/2}
     )^{1/2}
     \right), \label{normal_was_full}
 \end{align}
where $\Sigma^{1/2}_2$ is the principal square root \citep{horn2012matrix} of $\Sigma_2$.
When the covariance matrix is diagonal, the squared Wasserstein-2 distance can be simplified to:
 \begin{align*}
     W_2\left(
      \mathcal{N}(m_1, \operatorname{diag}(\sigma_1^2)),
    \mathcal{N}(m_2, \operatorname{diag}(\sigma_2^2))
     \right)^2
     =
     \left\Vert
     m_1-m_2
     \right\Vert^2_2
     +
     \left\Vert
     \sigma_1-\sigma_2
     \right\Vert^2_2.
 \end{align*}

 The variational problem defined in \qref{kantor} is a balanced optimal transport since the total masses of the source and target are equal.
 More often than not, the source and target masses are not equal in practical problems \cite{fatras2021unbalanced, balaji2020robust}.
 The theoretical and numerical development of UOT has advanced since the two papers \cite{chizat2018unbalanced, chizat2018scaling}.
 In the UOT, the hard marginal constraints \qref{projections} are relaxed by using divergence between probability measures.
 The variational formulation of UOT has the form:
 \begin{align}
     \inf_{\gamma}
     \left\{
     \int_{X\times Y}
     c(x, y) d \gamma(x, y)
     +
     D_{X}\left(
     P^{X}_{\#}\gamma \middle \vert \mu
     \right)
     +
     D_{Y}\left(
     P^{Y}_{\#}\gamma \middle \vert \nu
     \right)
     \right\},\label{uot_original}
 \end{align}
 where $D_X$ and $D_Y$ are convex divergences such as the Kullback-Leibler divergence \citep{kullback1951information} that measure the discrepancies of the marginals of $\gamma$ between $\mu$ and $\nu$, respectively.
 The divergences play as a soft penalty to match the marginal consistent constraints.
 One can also vary the weights of the two divergences to determine how much the marginal constraints should be satisfied.
\section{Methodology}

\label{section:methodology}

In this section, we use UOT to develop a  stochastic particle 
tracking algorithm that can identify and extract stochastic particle tracks from the posterior particle distributions produced by BVR.
We first present the matching algorithm that pairs Gaussian distributions between two frames.
To improve the matching performance, we will discuss the first-order position prediction and hyperparameter optimization.
Next, we discuss how to extract stochastic Lagrangian tracks after Gaussian distribution pairs having been identified.

\subsection{Robust particle matching using integer unbalanced optimal transport}

We first explain how to robustly match noisy reconstructed particles between two consecutive time frames and propagate uncertainties.
In the following sections, we assume reconstructed particle positions at each time frame are provided through particle position reconstruction.
Particle position reconstruction approaches generally fall into two categories: deterministic \citep{wieneke2012iterative, schanz2016shake} and stochastic approaches \citep{bhattacharya2020volumetric, hansstochastic}.
In the deterministic approaches, reconstructed particle positions 
can be represented by the sum of Dirac measures
$
\sum_{i=1}^{N} \delta_{r_i}(r)
$.
In the stochastic approaches, this becomes the sum of Gaussian measures
$
\sum_{i=1}^{N} 
\mathcal{N}\left(r\middle\vert m_i, \Sigma_i\right).
$
It is important to recall that  Gaussian measure is fully supported in $\mathbb{R}^3$, while the flow domain $U$ is usually supported finitely.
In practical applications, it is still reasonable to represent a particle position using a Gaussian measure since particles that are outside of the boundary of $U$ are discarded.

Although the objective of this paper is to develop a matching algorithm that works for stochastically reconstructed particles, our algorithm also suitably applies to the deterministic reconstruction results. 
To cover these two cases, we work in the metric space of probability measures $(\mathbf{P}, d)$.
A point in this metric space is denoted by $\pi$ and a collection of probability measures from this metric space is denoted by 
$\left\{
\pi_i
\right\}_{i\in I}$.
In the deterministic reconstruction case,
$\mathbf{P}$ is the space of Dirac measures.
A point $\pi$ is a Dirac measure, and the metric $d$ is the usual Euclidean norm.
In the stochastic reconstruction case,
$\mathbf{P}$ is the space of Gaussian measures, a point $\pi$ is a Gaussian measure and the metric $d$ is the Wasserstein 2- distance \citep{takatsu2011wasserstein}.
We assume there are $N$  reconstructed particle positions
$
\Pi^k=
\left(
\pi^k_1, \cdots, \pi^k_{N}
\right)
$ at time frame $k$,
and $M$ such particles
$
\Pi^{k+1}=
\left(
\pi^{k+1}_1, \cdots, \pi^{k+1}_{M}
\right)
$ at time frame $k+1$.
It should be emphasized that $N$ is not necessarily equal to $M$.

Our objective is to design an optimal transport plan in the space of probability measures $\mathbf{P}$, i.e., $X=Y=\mathbf{P}$.
First, we write down the source mass:
\begin{align}
    \mu(\pi) =
    \sum_{i}^N \delta_{\pi^k_i}\left(\pi\right),
    \label{source}
\end{align}
and the target mass:
\begin{align}
    \nu(\pi) =
    \sum_{i}^M \delta_{\pi^{k+1}_i}\left(\pi\right).\label{target}
\end{align}
A graphic illustration of transporting Gaussian distributions from the source to the target is shown in Fig. \ref{Gaussian_transport}.

\begin{figure}[!h]
  \centering
\includegraphics[scale=0.45]{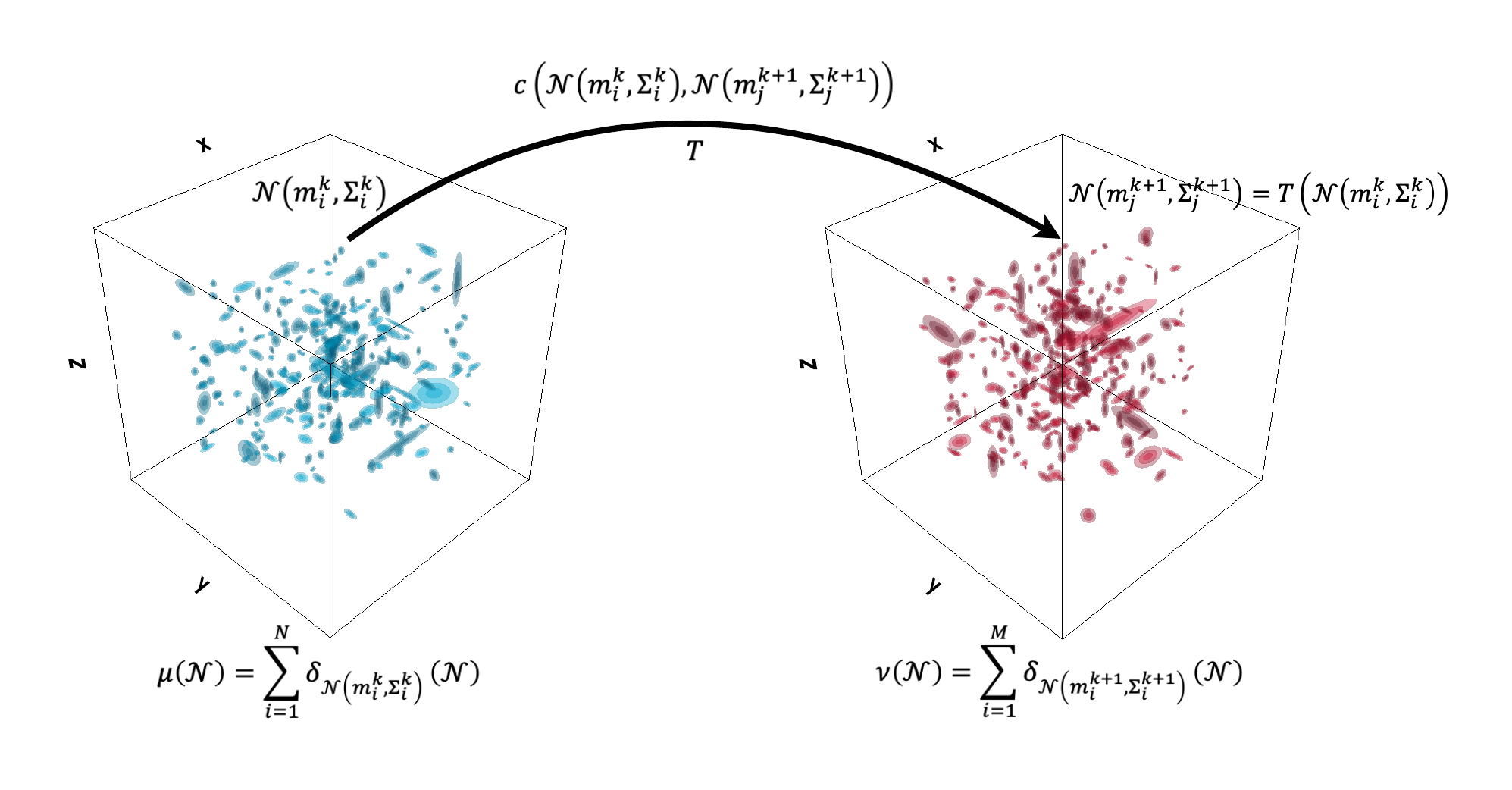}
  \caption{An illustration of unbalanced optimal transport of stochastic particle position reconstructions.}
  \label{Gaussian_transport}
\end{figure}

Since the source mass is not necessarily equal to the target mass, i.e., $N\neq M$, this problem is unbalanced. 
The original UOT formulation in \qref{uot_original} and extensions
\citep{chapel2021unbalanced} allow mass split.
Therefore, the optimal transport matrix is a dense matrix that is hard to process.
In addition, the total transported mass is unknown a priori.
Inspired by the partial Wasserstein distance (PWD) formulation of UOT \citep{chapel2020partial}, we formulate an IUOT that is a BLP problem
shown in Eqs.~(\ref{buot}), where we enforce that the transport plan matrix $\gamma_{ij}$ can only take 0 or 1.
Our approach computes the exact solution of UOT without adding the entropy regularization. 
In addition, our formation also constrains the total number of transported particles $N_p$.
The first constraint states that the mass transported from the source should not be greater than the mass of the source.
 The second constraint states that the mass received by the target should not be greater than the mass of the target.
 The fourth binary constraint guarantees no mass split.
 In the third constraint, 
 the positive integer $N_p$ is the total mass to be transported. 
 This is the hyperparameter 
 that should be carefully chosen or optimized.
 In PTV particle matching, due to missing and overlapped particles, fluid flowing in and out of the observation volume and particle position reconstruction errors,
 the value of $N_p$ should be chosen less than $\min\left\{N, M\right\}$.
 So we can use the relative transport number  $\alpha$ to determine $N_{p, \alpha}= \operatorname{ceil}\left(\alpha \min\{N, M\}\right)$.
 The number $N-N_{p,\alpha}$ denotes our guess of the number of particles that do not have a matched particle pair in frame $k+1$ due to position reconstruction errors or particles flowing out of the flow domain.
 Similarly, the number $M-N_{p,\alpha}$ denotes our guess of the number of particles that do not have a matched particle pair in frame $k$.
 Therefore the hyperparameter $\alpha$ determines the robustness of our algorithm.
 Intuitively, a smaller $\alpha$ is more robust but identifies less particle tracks (details in section \ref{section_interpret}) and vice versa.
 In section \ref{section_hyper}, we will discuss the hyperparameter optimization of $\alpha$ in detail.

The BLP problem is NP-hard \citep{sherali2000evolution}, and the practical approach is to relax it to a linear programming problem that can be solved in polynomial time.
The PWD formulation \citep{chapel2020partial}
shown in Eqs.~(\ref{uot}) is a relaxed version of our BLP problem, 
since it only requires that $\gamma_{ij}$ is nonnegative.
The authors built an algorithm based on Frank-Wolfe optimization scheme \citep{frank1956algorithm}
and mentioned that the algorithm produces a sparse transport plan matrix.
In our formulation, we force $N_p$ to be an integer and give an important proposition stating that the PWD and BLP have the same optimal solution.
Namely, the optimal transport matrix of PWD only contains 0 and 1.
This allows us to solve the polynomial time PWD problem to compute the optimal binary transport plan matrix. 
This will ensure that the whole mass of a probability measure is transported to one target at the next time frame, simplifying downstream processing for particle tracking.

\begin{multicols}{2}
\noindent
  \begin{equation}
    \begin{split}
    &\text{IUOT (BLP):}
    \\
    \min_{\gamma}\quad &\langle \gamma, c\rangle_F
     \\
     s.t. \quad &\sum_{j=1}^M\gamma_{ij}\leq 1, \ \text{for} \ i = 1, \cdots, N,
     \\
     &\sum_{i=1}^N\gamma_{ij}\leq 1, \ \text{for} \ j = 1, \cdots, M,
     \\
     &\sum_{i=1}^N\sum_{j=1}^M \gamma_{ij} = N_{p}\in \mathbb{N}^{+} \le \min\left\{N, M\right\},
     \\
     &\gamma_{ij}\in \{0, 1\}.
    \end{split}
    \label{buot}
\end{equation}
  \begin{equation}
    \begin{split}
    &\text{PWD (Relaxed BLP):}
    \\
    \min_{\gamma}\quad &
    PW_{N_p}=\langle \gamma, c\rangle_F
     \\
     s.t. \quad &\sum_{j=1}^M\gamma_{ij}\leq 1, \ \text{for} \ i = 1, \cdots, N,
     \\
     &\sum_{i=1}^N\gamma_{ij}\leq 1, \ \text{for} \ j = 1, \cdots, M,
     \\
     &\sum_{i=1}^N\sum_{j=1}^M \gamma_{ij} = N_{p}\in \mathbb{N}^{+} \le \min\left\{N, M\right\},
     \\
     &\gamma_{ij} \ge 0, 
    \end{split}
    \label{uot}
\end{equation}
\end{multicols}

\begin{theorem}     
\label{theorem:1}     Let $\gamma^{\ast}_{\text{IUOT}}$ and $\gamma^{\ast}_{\text{PWD}}$ be the minimizers of IUOT (Eqs.~(\ref{buot})) and PWD (Eqs.~(\ref{uot})), then      $\gamma^{\ast}_{\text{IUOT}}=\gamma^{\ast}_{\text{PWD}}$. 
\end{theorem}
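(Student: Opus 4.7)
The plan is to prove that every vertex of the PWD feasible polytope is a $0/1$ matrix, so that a vertex-minimizer of the linear relaxation automatically belongs to the IUOT feasible set. Since PWD is a bona fide relaxation of IUOT (the feasible set of IUOT is contained in the feasible set of PWD, with the same objective), this immediately implies equality of the two optima and lets us identify the minimizers.

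First I would observe that PWD is a linear program on a bounded polytope (each $\gamma_{ij}\in[0,1]$ by the row/column constraints), so by the fundamental theorem of linear programming the minimum is attained at a vertex $\gamma^{\ast}$ of the feasible polytope. The core step is then to show that every vertex of the PWD polytope has entries in $\{0,1\}$. The cleanest route is to recast PWD as a standard balanced transportation LP by introducing a dummy source of supply $M-N_p$ and a dummy sink of demand $N-N_p$, with zero transport cost on the dummy row and column. Explicitly, set $\tilde{\gamma}_{i,M+1}=1-\sum_{j=1}^M\gamma_{ij}$ and $\tilde{\gamma}_{N+1,j}=1-\sum_{i=1}^N\gamma_{ij}$, so the inequality constraints become equalities and the total-mass constraint $\sum_{ij}\gamma_{ij}=N_p$ is automatically enforced by supply/demand balance. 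The resulting constraint matrix is the node-edge incidence matrix of a bipartite graph and is therefore totally unimodular; combined with the fact that all right-hand sides are integers, every basic feasible solution of the augmented problem, and hence every vertex of the original PWD polytope, has integer entries. The row and column sum bounds then force these entries into $\{0,1\}$.

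Once integrality of the vertices is established, the rest is a short chain of inequalities. Let $\gamma^{\ast}_{\mathrm{PWD}}$ denote the vertex minimizer produced by the argument above; by construction $\gamma^{\ast}_{\mathrm{PWD}}\in\{0,1\}^{N\times M}$ and satisfies all the IUOT constraints, so $\gamma^{\ast}_{\mathrm{PWD}}$ is IUOT-feasible and therefore $\langle\gamma^{\ast}_{\mathrm{IUOT}},c\rangle_F\le\langle\gamma^{\ast}_{\mathrm{PWD}},c\rangle_F$. Conversely, every IUOT-feasible point is PWD-feasible, giving the reverse inequality $\langle\gamma^{\ast}_{\mathrm{PWD}},c\rangle_F\le\langle\gamma^{\ast}_{\mathrm{IUOT}},c\rangle_F$. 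Hence the optimal values agree and $\gamma^{\ast}_{\mathrm{PWD}}$ realizes the IUOT minimum; symmetrically $\gamma^{\ast}_{\mathrm{IUOT}}$ is a PWD minimizer (it attains the PWD optimal value within the PWD feasible set), so the sets of minimizers coincide.

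The main obstacle is the integrality claim for the PWD polytope: it is tempting to cite Birkhoff--von Neumann directly, but that theorem concerns doubly stochastic matrices, not subpermutation matrices constrained to have a prescribed total mass $N_p$. The dummy-row/dummy-column reformulation is what makes total unimodularity applicable cleanly, and I would spend the bulk of the write-up verifying that the augmentation preserves the objective value and that every vertex of the augmented transportation polytope projects to a vertex of the original PWD polytope, so no integer vertex is lost or spuriously introduced. Degeneracy (non-unique minimizers) is a minor subtlety: the statement $\gamma^{\ast}_{\mathrm{IUOT}}=\gamma^{\ast}_{\mathrm{PWD}}$ should be read as equality of the corresponding minimum sets, or equivalently as the existence of a common minimizer, which is exactly what the vertex argument delivers.
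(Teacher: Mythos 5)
Your proof is correct, and it reaches the same structural conclusion as the paper (the PWD polytope has only $\{0,1\}$ vertices, so the LP relaxation attains the IUOT optimum), but it establishes the key integrality lemma by a genuinely different route. The paper rewrites the PWD constraints as $\{A\gamma\le b,\ \gamma\ge 0\}$ with the mass constraint split into two inequalities, and then proves total unimodularity of $A$ directly by an explicit case analysis on every square submatrix containing the special all-ones row (distinguishing $k=m$, $k=m\pm 1$, counting the number of ones per row, and expanding cofactors along special columns), before invoking the Hoffman--Kruskal theorem. You instead absorb the total-mass constraint into a dummy source of supply $M-N_p$ and a dummy sink of demand $N-N_p$ (with the dummy-dummy cell excluded so that $\sum_{ij}\gamma_{ij}=N_p$ is forced by flow balance), turning PWD into a balanced transportation LP whose constraint matrix is the incidence matrix of a bipartite graph and hence totally unimodular by a textbook result; since the augmentation only adds slack variables, the two feasible sets are affinely isomorphic and vertices correspond to vertices. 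Your reduction buys a much shorter and less error-prone integrality argument that leans on a standard theorem, whereas the paper's direct computation is self-contained and makes explicit exactly which submatrix configurations can arise from the extra cardinality row. You are also right to flag that Birkhoff--von Neumann does not apply here, and your reading of the conclusion as equality of optimal values together with existence of a common (integral) minimizer is the correct way to handle possible degeneracy. The only points worth spelling out in a full write-up are the exclusion of the dummy-to-dummy cell (otherwise the total transported mass is only bounded below by $N_p$) and the feasibility requirement $N_p\le\min\{N,M\}$, which is exactly the constraint already imposed in Eqs.~(\ref{buot}) and (\ref{uot}).
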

Proof is in Appendix~\ref{appendix:proof}.

\subsubsection{Construction of the transport cost matrix}

In this section, we explain two approaches to constructing the transport cost matrix using zero or first-order particle position predictions.
The first-order prediction approach is an improvement of the zero-order prediction approach and internally incorporates the zero-order prediction.

\paragraph{Zero-order prediction:}
In the zero-order prediction case, 
we define the cost matrix $c_{ij} = W_2\left(\pi^k_{i}, \pi^{k+1}_j\right)^2$.
It is called zero-order prediction since we calculate the distance between $\pi^k_i$ and $\pi^{k+1}_j$ without making position prediction using $\pi^k_{i}$.
 When reconstructed particle positions are deterministic,
the cost matrix element $c_{ij} = W_2\left(
      \delta_{r_i^k},
    \delta_{r_j^{k+1}}
     \right)^2$ is the squared Euclidean norm. 
If Gaussian estimation of particle positions is provided, the cost matrix element
$
c_{ij} =W_2\left(
      \mathcal{N}(m^k_i, \Sigma^k_i),
    \mathcal{N}(m^{k+1}_j, \Sigma^{k+1}_j)
     \right)^2
$ is the squared Wasserstein-2 distance. 

One should notice that, with these transport costs, we implicitly assume that the index of the $N$ reconstructed particle position  $\pi^k$ is aligned with the row index of $\gamma$, and the index of the $M$ reconstructed particle position $\pi^{k+1}$ is aligned with the column index of $\gamma$.

The zero-order prediction performs poorly when the particle spacing is relatively small compared to the particle displacement between two frames.
This can be improved by using the first-order prediction.

\paragraph{First-order prediction:}
In the first-order prediction method, we make particle position predictions by linearly extending the matched reconstructed particle positions at frames $k-1$ and $k$. 
We use $A= \left\{1,\cdots, N \right\}$ to denote the set of particle IDs at frame $k$.
Next, we make the decomposition $A=A_{\text{matched}}\cup A_{\text{unmatched}}$.
The subsets $A_{\text{matched}}$ and $A_{\text{unmatched}}$ contain particle IDs for the reconstructed particles in $\mu$ that have and do not have matched pairs at the previous frame $k-1$, respectively. 
We use the notation $l(i)$ to denote the $l(i)$th particle at frame $k-1$ is a matched pair to the $i$th particle at frame $k$.
Namely,  $\left(\pi^{k-1}_{l(i)}, \pi^k_i\right)$ is a matched pair.

The linear extension position prediction is trivial for the particles with IDs in $A_{\text{matched}}$ since they have matched pairs at frame $k-1$.
However, for the particles with IDs in $A_{\text{unmatched}}$, we have to use neighborhood particles to estimate the predicted displacements.

First, we make the following definitions and a pictorial explanation is in Fig. \ref{epsilon particles}.
\begin{definition}[$\epsilon$-ball $B_{\epsilon}(\pi)$]
    The $\epsilon$ ball in the metric space of probability measures $(\mathbf{P}, W_2)$ centered at $\pi$ and of radius $\epsilon$ is 
    $
    B_{\epsilon}(\pi) = 
    \left\{\Tilde{\pi} \in \mathbf{P} \ \middle \vert \
    W_2\left(
    \Tilde{\pi}, \pi
    \right)
    < \epsilon
    \right\}
    $.
\end{definition}
\begin{definition}[Punctured $\epsilon$-ball $B^{\ast}_{\epsilon}(\pi)$]
    The punctured $\epsilon$-ball is the $\epsilon$-ball minus its center $
    B^{\ast}_{\epsilon}(\pi)=B_{\epsilon}(\pi)\setminus \pi
    $.
\end{definition}
\begin{definition}[$B^{\ast}_{\epsilon}(\pi)$-particles relative to a finite set of particles $\Pi$]
    The $B^{\ast}_{\epsilon}(\pi)$-particles for $\pi\in \mathbf{P}$ relative to $\Pi$ is the subset of particles from $\Pi$ contained in  $B^{\ast}_{\epsilon}(\pi)$, i.e., 
    $B^{\ast}_{\epsilon}(\pi)$-particles = 
    $
    B^{\ast}_{\epsilon}(\pi)\cap \Pi
    $.
\end{definition}

\begin{figure}[!h]
  \centering
\includegraphics[scale=0.45]{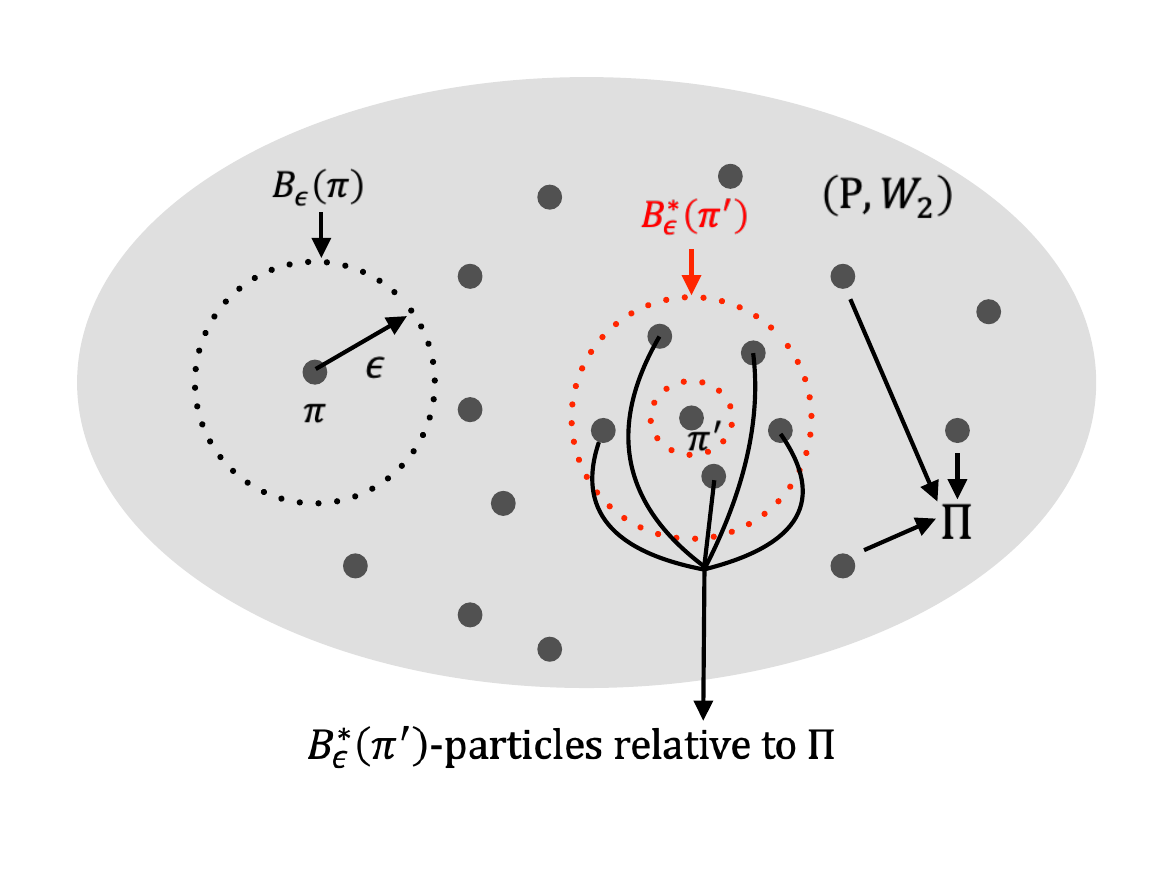}
  \caption{Visualization of particle topology in the metric space of probability measures.}
  \label{epsilon particles}
\end{figure}

The goal is to use the $B^{\ast}_{\epsilon}(\pi^k_i)$-particles (relative to the source $\Pi^k$) to estimate the displacement of $\pi^k_i$ for $i \in A_{\text{unmatched}}$. 
Depending on whether the $B^{\ast}_{\epsilon}(\pi^k_i)$-particles  have matched pairs in frame $k-1$,
we make the further decomposition 
$A_{\text{unmatched}} = A_{\text{unmatched}, e}\cup A_{\text{unmatched}, n}$.
The subset $A_{\text{unmatched}, e}$ is effective and  denotes the IDs of particles  that have $B^{\ast}_{\epsilon}(\pi^k_i)$-particles with matched pairs at frame $k-1$.
The set $A_{\text{unmatched}, n}$ is the complement $A_{\text{unmatched}} \setminus A_{\text{unmatched}, e}$ and denotes the IDs of particles whose $B^{\ast}_{\epsilon}(\pi^k_i)$-particles have no matched pairs at frame $K-1$. 
Mathematically, we write down
\begin{align*}
    A_{\text{unmatched}, e} = 
    \left\{
     i \in A_{\text{unmatched}} \
     \middle
     \vert
     \
     \exists j\in A_{\text{matched}}, 
     \text{such that}
     \
     \pi^k_j \in
     B^{\ast}_{\epsilon}(\pi^k_i)
    \right\}.
\end{align*}
Next, for $i\in A_{\text{unmatched},e}$ we denote the set
\begin{align*}
    A_{\text{matched}, ei}
    =
    \left\{
    j\in A_{\text{matched}}
    \
    \middle\vert
    \
    \pi^k_j \in 
    B^{\ast}_{\epsilon}(\pi^k_i)
    \right\},
\end{align*}
from which we use the weighted average displacement to estimate the displacement of the particle $\pi^k_{i}$ in $A_{\text{unmatched}, e}$.
For the particles with IDs in $A_{\text{unmatched}, n}$, since all their $B^{\ast}_{\epsilon}(\pi^k_i)$-particles are not paired to particles in frame $k-1$, 
we keep the zero-order prediction method.
Fig.~\ref{set partition} illustrates how we partition the particles.
\begin{figure}[!h]
  \centering
\includegraphics[scale=0.45]{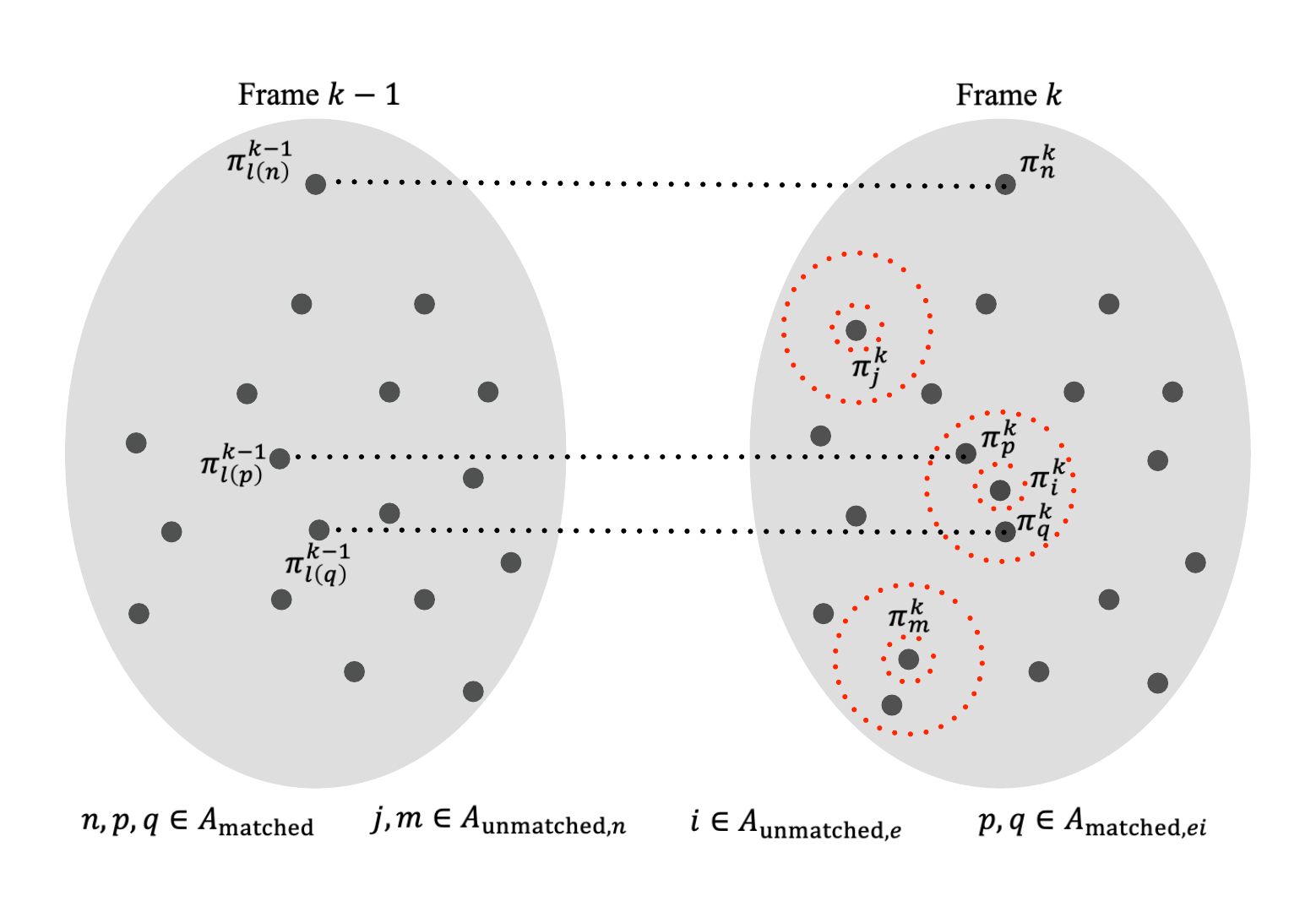}
  \caption{Visualization of set partition for the first-order prediction.}
  \label{set partition}
\end{figure}

\paragraph{Case 1: deterministic reconstruction.}
For the deterministic reconstruction case, the transport cost is
\begin{align*}
    c_{ij}=W_{2}
        \left(
        \delta_{r^{k+1|k}_i},
        \delta_{r^{k+1}_j}
        \right)^2,
\end{align*}
where the notation $ \delta_{r^{k+1|k}_i}$ means the predicted particle position from frame $k$ to $k+1$ for the $i$th particle.
Since we decompose the particle IDs into three subsets 
$
A=A_{\text{matched}}\cup
A_{\text{unmatched}, e}\cup
A_{\text{unmatched}, n}
$, we use three different strategies to make the position prediction. 
For $i\in A_{\text{matched}}$, we simply linearly extend the particle position.
For $i\in A_{\text{unmatched}, e}$, we use Wasserstein distance weighted predicted displacements of $B^{\ast}_{\epsilon}(\pi^k_i)$-particles to estimate the predicted particle position.
For $i\in A_{\text{unmatched}, n}$, we keep the zero-order prediction.
We assume the time steps between recorded images are the same.
Then, the predicted particle position formula is summarized as follows:
\begin{align*}
    r^{k+1|k}_i = 
        \begin{cases}
            r^k_i + \left(
            r^k_i -r^{k-1}_{l(i)}
            \right),
            \
            &\forall i \in A_{\text{matched}},
            \\
            r^k_i
            +
             \dfrac{
             \sum_{j\in A_{\text{matched}, ei}}
            \left(
            r^k_j - r^k_i
            \right)
            \left(
            r^k_j -r^{k-1}_{l(j)}
            \right)
            }{
            \sum_{j\in A_{\text{matched}, ei}}
            \left(
            r^k_j - r^k_i
            \right)
            },
            \
            &\forall i \in A_{\text{unmatched}, e},
            \\
            r^k_i, 
            &\forall i \in A_{\text{unmatched}, n}.
        \end{cases}
\end{align*}

\paragraph{Case 2: stochastic reconstruction.}
For the Gaussian stochastic reconstruction case, the transport cost is
\begin{align*}
    c_{ij} = W_{2}
        \left(
        \mathcal{N}(m^{k+1|k}_i, \Sigma^{k+1|k}_i),\
    \mathcal{N}(m^{k+1}_j, \Sigma^{k+1}_j)
     \right)^2,
\end{align*}
where $m^{k+1|k}_i$ and $\Sigma^{k+1|k}_i$ are the predicted mean and covariance matrix of the Gaussian distribution.
Since BVR uses diagonal multivariate normal distributions as the guide, we present a first-order method when the covariance matrices are diagonal and only implemented this parsimonious case numerically.
The general case when the covariance matrices are dense significantly increases the computational costs.
We leave the discussion of the full covariance matrix case for Appendix \ref{appendix: full cov}.
We denote $R = (X, Y, Z)$ as the random position vector and $\mathbb{V}$ as the variance operator.
Similar to case 1, three subcases need to be considered.

For $i\in A_{\text{matched}}$, 
linear extension has the form
\begin{align*}
    R^{k+1|k}_i = R^k_i + 
    \left(
    R^k_i - R^{k-1}_{l(i)}
    \right)
\end{align*}
where
\begin{align*}
    R^k_i\sim \mathcal{N}
    \left(
    m^{k}_{i}, 
    \operatorname{diag}\left\{
    \left(\sigma^k_{i, X}\right)^2,
    \left(\sigma^k_{i, Y}\right)^2,
    \left(\sigma^k_{i, Z}\right)^2
    \right\}
    \right)
\end{align*}
and
\begin{align*}
    R^{k-1}_{l(i)} \sim 
    \mathcal{N}
    \left(m^{k-1}_{l(i)}, 
    \operatorname{diag}\left\{
    \left(\sigma^{k-1}_{l(i), X}\right)^2,
    \left(\sigma^{k-1}_{l(i), Y}\right)^2,
    \left(\sigma^{k-1}_{l(i), Z}\right)^2
    \right\}
    \right).
\end{align*}

We then make a valid simplification that assumes $R^k_i$ and $R^{k-1}_{l(i)}$ are independent. 
So the predicted Gaussian distribution is
\begin{align*}
    R^{k+1|k}_i
    \sim
    \mathcal{N}
    \left(m^{k+1|k}_{i}, 
    \operatorname{diag}\left\{
    \left(\sigma^{k+1|k}_{i, X}\right)^2,
    \left(\sigma^{k+1|k}_{i, Y}\right)^2,
    \left(\sigma^{k+1|k}_{i, Z}\right)^2
    \right\}
    \right)
\end{align*}
 where the predicted mean and standard deviations can be computed by 
\begin{align*}
    m^{k+1|k}_{i} &= 
    \mathbb{E}
    \left[
    2R^{k}_i- R^{k-1}_{l(i)}
    \right]
    =
    2m^k_i - m^{k-1}_{l(i)}
\end{align*}
and
\begin{align*}
    (\sigma^{k+1|k}_{i, \psi})^2
    =\mathbb{V}
    \left[
    2\psi^{k}_{i}- \psi^{k-1}_{l(i)}
    \right]
    =
     4\left(\sigma^k_{i, \psi}\right)^2 + \left(\sigma^{k-1}_{l(i), \psi}\right)^2, \
     \text{for}, 
     \psi = X, Y, \text{and}\ Z.
\end{align*}

For $i\in A_{\text{unmatched}, e}$, we have to estimate the translation and stretches along $x$, $y$ and $z$ axes of the Gaussian ellipsoid using $B^{\ast}_{\epsilon}(\pi)$-particles.
For the translation, we use Wasserstein distance weighted predicted displacements of $B^{\ast}_{\epsilon}(\pi^k_i)$-particles to estimate the predicted mean position:
\begin{align*}
    m^{k+1|k}_i
    =
    m^k_i + 
     \dfrac{
     \sum_{j\in A_{\text{matched}, ei}}
    W_{2}\left(
    \mathcal{N}(m^k_j, \Sigma^k_j),
    \
    \mathcal{N}(m^{k}_i, \Sigma^{k}_i)
    \right)
    \left(
    m^k_j -m^{k-1}_{l(j)}
    \right)
    }{
     \sum_{j\in A_{\text{matched}, ei}}
   W_{2}\left(
    \mathcal{N}(m^k_j, \Sigma^k_j),
    \
    \mathcal{N}(m^{k}_i, \Sigma^{k}_i)
    \right)
    }.
\end{align*}
For the stretches, we use weighted stretch ratios:
\begin{align*}
    \sigma^{k+1|k}_{i, \psi}
    =
    \sigma^{k}_{i, \psi}
    \dfrac{
     \sum_{j\in A_{\text{matched}, ei}}
    W_{2}\left(
    \mathcal{N}(m^k_j, \Sigma^k_j),
    \
    \mathcal{N}(m^{k}_i, \Sigma^{k}_i)
    \right)
    \left(
    \frac{\sigma^{k+1|k}_{j,\psi}}{\sigma^{k}_{j,\psi}}
    \right)
    }{
     \sum_{j\in A_{\text{matched}, ei}}
   W_{2}\left(
    \mathcal{N}(m^k_j, \Sigma^k_j),
    \
    \mathcal{N}(m^{k}_i, \Sigma^{k}_i)
    \right)
    }, \quad
    \text{for} \
    \psi = X, Y,  \text{and}\ Z.
\end{align*}

For $i\in A_{\text{unmatched}, n}$, we keep the zero-order prediction.

In summary, for the stochastic reconstruction using diagonal covariance matrices, we have
\begin{align*}
    m^{k+1|k}_i
    =
    \begin{cases}
        m^{k}_i + \left(
        m^k_i - m^{k-1}_{l(i)}
        \right), &\forall i \in A_{\text{matched}},
        \\
        m^k_i + 
         \dfrac{
         \sum_{j\in A_{\text{matched}, ei}}
        W_{2}\left(
        \mathcal{N}(m^k_j, \Sigma^k_j),
        \
        \mathcal{N}(m^{k}_i, \Sigma^{k}_i)
        \right)
        \left(
        m^k_j -m^{k-1}_{l(j)}
        \right)
        }{
         \sum_{j\in A_{\text{matched}, ei}}
       W_{2}\left(
        \mathcal{N}(m^k_j, \Sigma^k_j),
        \
        \mathcal{N}(m^{k}_i, \Sigma^{k}_i)
        \right)
        }, & \forall i \in A_{\text{unmatched}, e},
        \\
        m^k_i, & \forall i \in A_{\text{unmatched}, n},
    \end{cases}
\end{align*}
and for $\psi=X, Y$ and $Z$
\begin{align*}
    \sigma^{k+1|k}_{i, \psi}
    =
    \begin{cases}
        \left(
        4\left(\sigma^{k}_{i, \psi}\right)^2
        + 
        \left(
            \sigma^{k-1}_{l(i), \psi}
        \right)^2
        \right)^{\frac{1}{2}}
        , &\forall i \in A_{\text{matched}},
        \\
        \sigma^{k}_{i, \psi}
        \dfrac{
         \sum_{j\in A_{\text{matched}, ei}}
        W_{2}\left(
        \mathcal{N}(m^k_j, \Sigma^k_j),
        \
        \mathcal{N}(m^{k}_i, \Sigma^{k}_i)
        \right)
        \left(
        \frac{\sigma^{k+1|k}_{j,\psi}}{\sigma^{k}_{j,\psi}}
        \right)
        }{
         \sum_{j\in A_{\text{matched}, ei}}
       W_{2}\left(
        \mathcal{N}(m^k_j, \Sigma^k_j),
        \
        \mathcal{N}(m^{k}_i, \Sigma^{k}_i)
        \right)}
        , & \forall i \in A_{\text{unmatched}, e},
        \\
        \sigma^{k}_{i, \psi}, & \forall i \in A_{\text{unmatched}, n}.
    \end{cases}
\end{align*}
The first-order prediction method relies on faithfully estimating the predicted displacement.
The estimation error will propagate forward in frames, which eventually degrades the performance of the first-order prediction method.
We will discuss how to remove corrupted displacement predictions in section \ref{section_hyper}.

\subsubsection{Interpretation of the optimal transport plan and the effect of hyperparameter transport number}
\label{section_interpret}

In this section, we give an interpretation of the optimal transport plan $\gamma^{\ast}$ and discuss how the hyperparameter transport number $N_p$ affects $\gamma^{\ast}$.

Fig. \ref{otplan} plots a case of optimal transport matrix $\gamma^{\ast}$ after solving PWD problem in Eqs.~(\ref{uot}) for an illustrated 2D case where $N=100$, $M=110$, and $N_p=85$.
Since the matrix $\gamma^{\ast}$ only contains 0 and 1.
This allows us to match the $i$th reconstructed particle position $\pi^k_i$ at frame $k$  to the $j$th reconstructed particle position $\pi^{k+1}_j$ at frame $k+1$  if $\gamma^{\ast}_{ij}=1$.
It should be emphasized that since the total transport number $N_p$ of particles we intend to match is less than $N$ and $M$, there are several zero rows and columns in $\gamma^{\ast}$.
By definition, when the $i$th row $\gamma_{i:}$ is zero, the $i$th particle  $\pi^k_i$ will not be transported to any target in $\pi^{k+1}$. 
In such a case, we say a particle track ends at frame $k$.
This may happen because of reconstruction errors, particles flowing out of the flow domain, or matching algorithm errors.
Similarly, when jth column $\gamma_{:j}$ is 0, there is no particle at frame $k$ transported to the $j$th particle at frame $k+1$.
Then we create a new particle track starting with $\pi^{k+1}_j$ at frame $k+1$. 

\begin{figure}[!h]
  \centering
\includegraphics[scale=0.50]{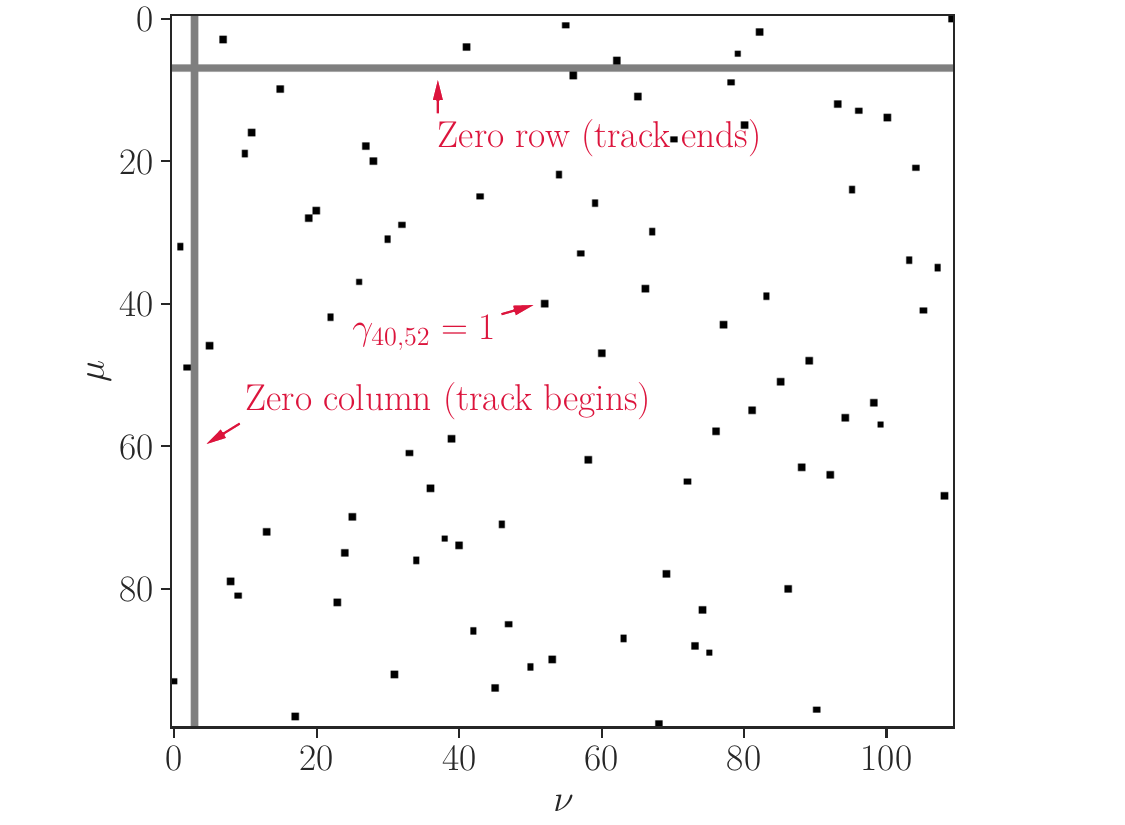}
  \caption{Optimal transport matrix ($N$=100, $M=110$ and $N_p=85$).}
  \label{otplan}
\end{figure}

The IUOT matching algorithm is robust since it can automatically detect suspicious particle pairs that have large displacements between two frames, and associate either zero row or column to them.
Also, it prioritizes matching particles that have small displacements and assigns 1 to the transport plan matrix entry $\gamma_{ij}^{\ast}$.

However, we should remind the readers that the number of zero rows and columns is not only determined by the number of suspicious particle pairs but also the hyperparameter transport number.
We set  $N_{p, \alpha} = \operatorname{ceil}\left(\alpha \min\{N, M\}\right)$, where $0<\alpha\le 1$.
 Fig. \ref{Np} shows the matching results for the previous 2D case with varying $\alpha$.
 When $\alpha=1$, we observe several large particle jumps.
 This is because we are overconfident and the total number of particles we should match is less than what we assume.
 As we decrease the value of $\alpha$, mismatches decrease.
 However, when $\alpha$ is too small, e.g., $\alpha=0.80$, we are too conservative and the number of particles we have matched is less than what we should match.
 An obvious better choice of $\alpha$ is 0.85 from observation.
 Therefore, selecting this hyperparameter is crucial, and we should automatically decide the value of $\alpha$ through hyperparameter optimization.
\begin{figure}[!h]
  \centering
\includegraphics[scale=0.42]{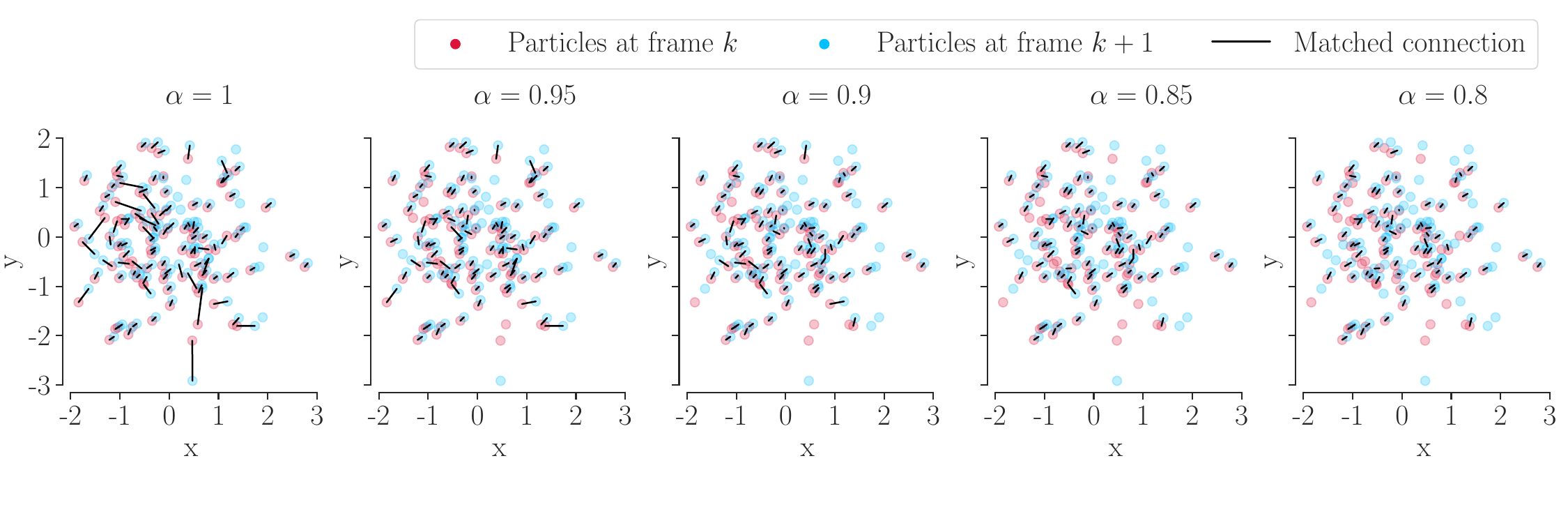}
  \caption{Particle matching with varying $\alpha$.}
  \label{Np}
\end{figure}

\subsubsection{Optimizing the hyperparameter transport number}
\label{section_hyper}

In this section, we optimize the hyperparameter $\alpha$.
To evaluate the matching performance, we use the yield rate $E_y$ and reliability rate $E_r$ defined in \cite{malik1993particle} and \cite{cardwell2011multi}:
\begin{align}
    E_y = \frac{N_c}{N_g},
    \
    \text{and}
    \
    E_r = \frac{N_c}{N_{p,\alpha}}.
    \label{metric}
\end{align}
In the above evaluation metrics, $N_c$, $N_g$, and $N_{p,\alpha}$ denote the numbers of correct matches, ground-truth matches and algorithm-extracted matches, respectively.
\begin{remark}
 The optimal $\alpha$ should achieve the highest number of correct matches $N_c$, hence the highest yield rate $E_y$,  and keep the reliability rate  $E_r$ at 1.   
\end{remark}
Namely, we want the maximize $N_{p, \alpha}$ while $E_r=1$.
The optimal $\alpha$ apparently exists since the number of particles is finite.
The big issue is we only know the value of $N_{p, \alpha}$, which is specified by us as the hyperparameter.
\begin{remark}
   The goal of hyperparameter optimization is to maximize an estimated number of correct matches $\hat{N}_c$, hence an estimated yield rate $\hat{E}_y$,  while keeping an estimated reliability rate $\hat{E}_r$ higher than a user-specified threshold. 
\end{remark}
This is achieved by comparing each particle displacement with the neighborhood particle displacements.

The overall hyperparameter optimization strategy follows: 
1) discretize the range of $\alpha$, 
2) solve the optimal transport plan for each $\alpha$ in parallel, 
3) identify faithful pairs and compute an estimated correct matches $\hat{N}_c$,
4) compute an estimated yield rate $\hat{E}_r$ to approximate the unknown yield rate $E_r$, 
5) reject $\alpha$ if $\hat{E}_r$ is below a specified threshold,
and 
6) remove non-faithful pairs.
A flow chart to visualize these six steps is shown in Fig. \ref{hyper_optimization}.

\begin{figure}[!htb]
  \centering
\includegraphics[scale=0.20]{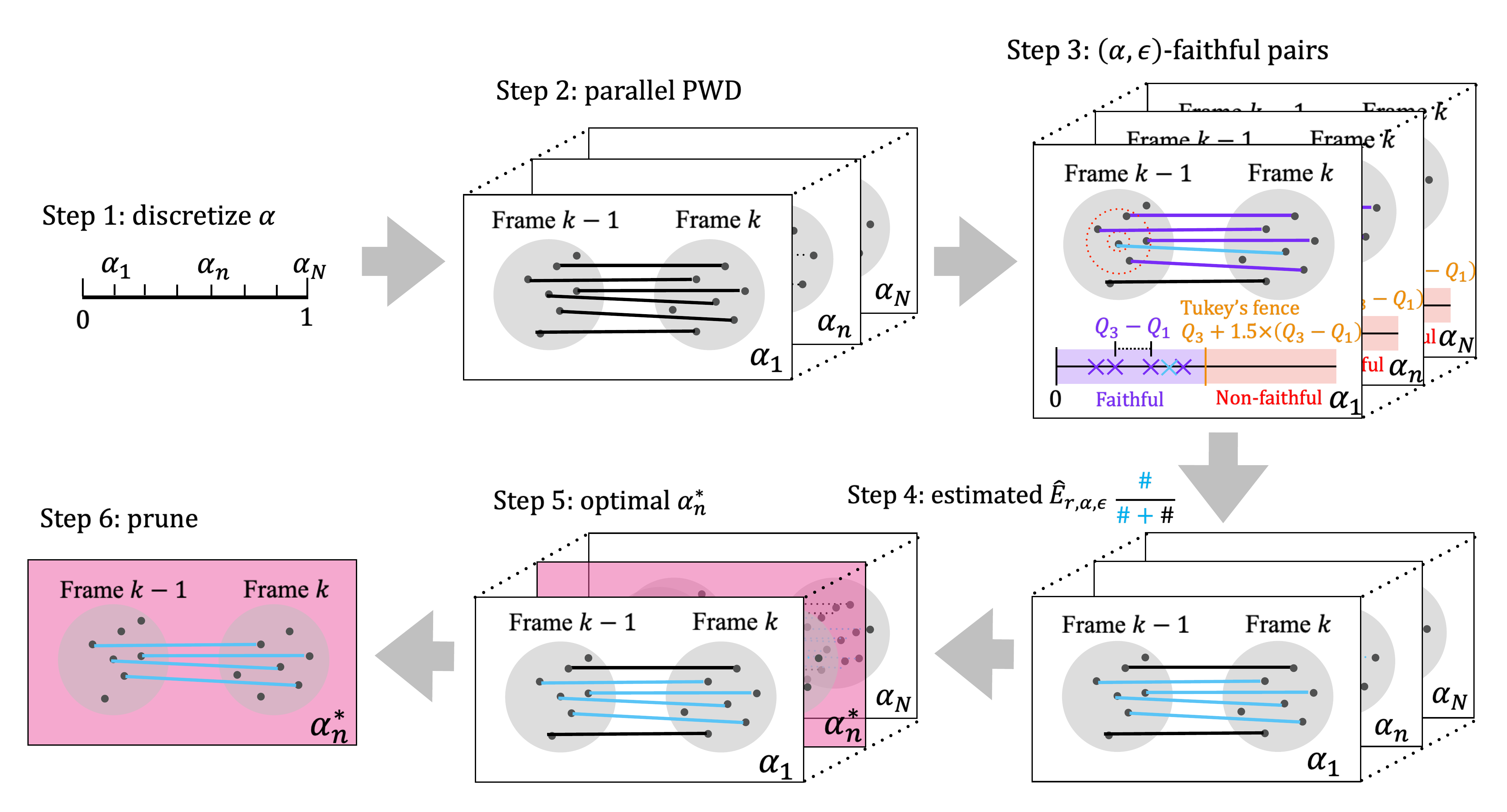}
  \caption{A flow chart of optimizing the relative transport number $\alpha$.}
  \label{hyper_optimization}
\end{figure}

The first step is to discretize the range $(0, 1]$ of $\alpha$ into a finite set of numbers
$0< \alpha_1 <,\cdots, <\alpha_N\le 1$.
In the second step,  we parallelly solve the PWD problem in \qref{uot} for each $\alpha_n$ to get a set of optimal transport matrix
$\left\{\gamma^{\ast}(\alpha_1),\cdots, \gamma^{\ast}(\alpha_N) \right\}$.
The third step is to find the number of faithful particle pairs identified in each $\gamma^{\ast}(\alpha_n)$.
To this end, we compare the displacement of each identified pair $\left(\pi^k_i, \pi^{k+1}_j\right)$ with the empirical displacement distribution of the $B^{\ast}_{\epsilon}(\pi^k_i)$-particles.
This step aims to check if the displacement of the particle $\pi^k_i$ is significantly larger than the displacements of the neighborhood particles.
The empirical distribution is represented by the finite set
\begin{align*}
    \widehat{\Delta}_{\alpha,\epsilon}\left(\pi^k_i\right)
    =
    \left\{
    W_2\left(
    \pi^k_{p}, \pi^{k+1}_q)
    \right)
    \
    \middle
    \vert
    \
    (p, q)\ 
    \text{such that}\
    \pi^k_p \in B^{\ast}_{\epsilon}\left(\pi^k_i \right)
    \
    \text{and}
    \
    \gamma^{\ast}_{pq}(\alpha)=1
    \right\}
\end{align*}
Then, we use Tukey's fences \cite{tukey1977exploratory} to decide whether to reject this pair.
Specifically, the statistical comparison criterion uses the first 
$
Q1\left(\widehat{\Delta}_{\alpha, \epsilon}\left(\pi^k_i \right)\right)
$ and third 
$
Q3\left(\widehat{\Delta}_{\alpha, \epsilon}\left(\pi^k_i \right)\right)
$ quartiles.
We define the faithful particle pair:
\begin{definition}[$(\alpha, \epsilon)$-faithful particle pair]
    An $(\alpha, \epsilon)$-faithful particle pair $(\pi^{k}_i, \pi^{k+1}_j)$ is an identified particle pair in $\gamma^{\ast}(\alpha)$, i.e., $\gamma^{\ast}_{ij}(\alpha)=1$, which satisfies
    \begin{align*}
        W_2(\pi^k_i, \pi^{k+1}_j)
        \le
        Q3\left(
        \widehat{\Delta}_{\alpha, \epsilon}\left(\pi^k_i \right)\right) 
        + 
        1.5 \times\left(
        Q3\left(
        \widehat{\Delta}_{\alpha, \epsilon}\left(\pi^k_i \right)\right) 
        - 
        Q1
        \left(
        \widehat{\Delta}_{\alpha, \epsilon}\left(\pi^k_i \right)\right) 
        \right).
    \end{align*}
\end{definition}
We test whether each identified pair is faithful or not and denote $\hat{N}_{c, \alpha, \epsilon}$ the total number of $(\alpha, \epsilon)$-faithful pairs.
We use $\hat{N}_{c, \alpha, \epsilon}$  to approximate the unknown number of correct matches $N_{c}$.

In step 4, we compute the estimated reliability rate $\hat{E}_{r, \alpha, \epsilon}$ defined by
\begin{align*}
    \hat{E}_{r, \alpha, \epsilon}
    =
    \frac{
    \hat{N}_{c, \alpha, \epsilon}
    }{N_{p, \alpha}}.
\end{align*}
After computing the estimated reliability rate for each $\alpha_n$, in step 5, we define the threshold for each $\alpha$
\begin{align*}
    \left(
    \eta_{1},
    \eta_2,
    \cdots,
    \eta_{N-1},
    \eta_{N}
    \right)
    =
    \left(
    0, 
    \frac{N_{p,\alpha_1}}{N_{p, \alpha_2}}
    ,
    \cdots,
    \frac{N_{p, \alpha_{N-2}}}{N_{p, \alpha_{N-1}}},
    \frac{N_{p, \alpha_{N-1}}}{N_{p, \alpha_N}}
    \right),
\end{align*}
and reject $\alpha_n$ if $\hat{E}_{r, \alpha_n, \epsilon}< \eta_n$.
Notice that we always accept the smallest $\alpha_1$.
The optimal $\alpha^{\ast}$ is the largest accepted $\alpha_n$ denoted by $\alpha_n^{\ast}$.

The last step is to remove the non-faithful pairs identified in $\gamma^{\ast}(\alpha^{\ast}_n)$. 
This pruning step is essential for the first-order prediction method since it is undesirable to make displacement predictions using non-faithful pairs and this error will propagate forward in frames.
Our synthetic experiments showed that this significantly improves the matching performance of the first-order prediction approach when facing large position reconstruction errors.

Algorithm \ref{alg:hyper} summarizes the essential steps to robustly match reconstructed particles using PWD with hyperparameter optimization.

\RestyleAlgo{ruled} 
\SetKwInput{KwInit}{Initialization}
\SetKwInput{KwInput}{Input}
\SetKwInput{KwReturn}{Return}
\SetKwInput{Kwalg}{Robust\_PWD\_HyperOptimal}
\SetKwComment{Comment}{$\triangleright$}{}
\begin{algorithm}[!ht]
\caption{
Robust PWD particle matching with hyperparameter optimization.
}\label{alg:hyper}
\Kwalg{
(
\begin{tabbing}
\hspace{1em} 
\= 
$(\mu, \nu)$, \hspace{5em} 
\= \quad
 \# \textit{Source and target.}
\\
\> 
$\left\{
\alpha_1, \cdots, \alpha_N
\right\}$, 
\> \quad
\# \textit{Discretized $\alpha$.}
\\
\>
$\epsilon$,
\> \quad
\# \textit{$\epsilon$-ball radius.}
\\
)
\end{tabbing}
}
\# Do in parallel
\;
 \For{$\alpha_n$ in $\left\{
\alpha_1, \cdots, \alpha_N
\right\}$ }{  
 1.
    Compute the optimal transport plan $\gamma^{\ast}(\alpha_n)$ and the estimated reliability rate $\hat{E}_{r, \alpha_n, \epsilon}$.
 \\
 2.
  Check if 
  $
  \hat{E}_{r, \alpha_n, \epsilon}< \eta_n
  $ to reject $\alpha_n$.
}
  3.
  Find the largest accepted $\alpha_n^{\ast}$ 
  and remove the
  $N_{p, \alpha^{\ast}_n}-\hat{N}_{r, \alpha^{\ast}_n, \epsilon}$ 
  number of non-faithful pairs from $\gamma^{\ast}(\alpha^{\ast}_n)$ to get a clean version of the optimal relative transport number 
  $\alpha^{\ast}=\hat{N}_{r, \alpha^{\ast}_n, \epsilon}/ \min\{N, M\}$ and
  the optimal transport plan $\gamma^{\ast}(\alpha^{\ast})$. 
  \\
 \KwReturn{
 $
 \left(
 \alpha^{\ast},
 \gamma^{\ast}(\alpha^{\ast})
 \right)
 $
 \hspace{1.5em}
 \#
 (\textit{Optimal relative transport number, 
 Optimal transport plan}).
 }
\end{algorithm}

\subsection{Extraction of particle tracks from matched pairs}

We explain how to patch the matched particles between two consecutive time frames to reconstruct whole particle tracks.
We assume that reconstructed particle positions at all $K$ time frames are provided 
$
\Pi = \left(
\Pi^1, \cdots, \Pi^K
\right)
$.
Each $\Pi^k = \left(
\pi^k_1, \cdots, \pi^k_{M^k}
\right)$ has $M^k$ number of particle position estimates, and we call $A^k = \left\{ 1, \cdots, M^k\right\}$ the set of particle IDs at frame k.
We also have solved the PWD problem in Eqs.~(\ref{uot}) $K-1$ times to get a sequence of optimal transport matrices
$
Q=
\left(
    \gamma^{\ast 1}, \cdots, \gamma^{\ast K-1}
\right)
$, where each $\gamma^{\ast k}$ is an $M^{k}$ by $M^{k+1}$ matrix.
The objective is to extract all particle tracks from the optimal transport matrices.

First, we make some definitions and notations.
Let 
$\left\{k\right\} = \left(1, \cdots, K\right)$
denote the sequence of time frames, $\left\{ k_i \right\}$
to denote a subsequence of $\left\{k\right\}$.
Then we reconstruct $N_T$ particle tracks with the track ID set $B=\left\{1, \cdots, N_T\right\}$.
For the $q$th track, we associate a subsequence $\left\{k^q_i\right\}$, where $i\in \left\{1, \cdots, I^q\right\}$. This means the subsequence $\left\{k^q_i\right\}$ has a length of $I^q$.
Moreover, the subsequence $\left\{k^q_i\right\}$ is an arithmetic sequence with a common difference of 1, i.e., 
$
k^q_{i+1} = k^q_{i}+1, \forall k\in \left\{1, \cdots, I^q-1\right\}
$.
Next, we define a function $f\left(q, k^q_i\right)$ that maps the track ID $q$ and the time frame number $k^q_i$
to a particle ID in the set $A^{k^q_i}$ at frame $k^q_i$.
Recall $\pi^k_i$ denotes the $i$th reconstructed particle position at time frame $k$,
then we make the formal definition,
\begin{definition}[Proper tracks]
Let $\Pi$ be a sequence of reconstructed particle positions.
We define the $q$th particle track: 
\begin{align*}
   \chi_q = \left\{
\pi^{k^q_i}_{f\left(p, k^q_i\right)}
\right\},\ \text{where} \
i\in \left\{1, \cdots, I^q\right\}.
\end{align*}
Then a proper set of particle tracks $\Phi=\left\{\chi_q\right\}_{q\in B}$ is a partition of $\Pi$, i.e., $\Pi$ is the disjoint union of $\chi_q$:
\begin{align*}
    \biguplus_{q}\chi_q = \Pi.
\end{align*}
Namely, the set $\Phi$ is exhaust and pairwise disjoint.
\end{definition}

\begin{definition}[Optimal tracks]
    \label{def2}
    An optimal set of particle tracks $\Phi^{\ast}$ is a proper set of particle tracks retrieved using the set $Q$ of optimal transport plan matrices that satisfies:
    \begin{enumerate}
        [label=(2.\arabic*)]
        \item \textnormal{(Initialization condition)} \ $\pi^1=\left\{
        \pi^{k^q_1}_{f\left(q, k^q_1\right)}\middle\vert q\in B \ \text{and} \ k^q_1=1
        \right\}$.\label{21}
        \item 
        \textnormal{(Matching condition)} \
        $\gamma^{\ast k^q_i}_{nm} = 1$, where $n=f\left(q, k^q_i\right)$ and $m=f\left(q, k^q_{i+1}\right)$, for all $q\in B$ and $i\in \left\{1,\cdots, I^{q}-1\right\}$,
        where $I^q>1$.\label{22}
        \item
        \textnormal{(New track start condition)} \
        $\gamma^{\ast k^q_1-1}_{:m}=0$, where $m=f\left(q, k^q_1\right)$, for all $q\in B$ such that $k^p_1>1$.\label{23}
        \item
        \textnormal{(Track end condition)}
            \begin{enumerate}[label=(2.4.\alph*)]
            \item 
            $\gamma^{\ast k^q_{I^q}}_{n:}=0$, where $n=f\left(q, k^q_{I^q}\right)$, for all $q\in B$ such that $k^q_{I^q}<K$.\label{24a}
            \\
            \item or $k^q_{I^q}=K$.\label{24b}
            \end{enumerate}
    \end{enumerate}
\end{definition}

A graphic illustration of definition \ref{def2} is shown in Fig. \ref{graph}.
Specifically, the figure explains three cases of keeping, stopping, and creating tracks depending on the optimal transport map values.
\begin{figure}[!h]
  \centering
\includegraphics[scale=0.6]{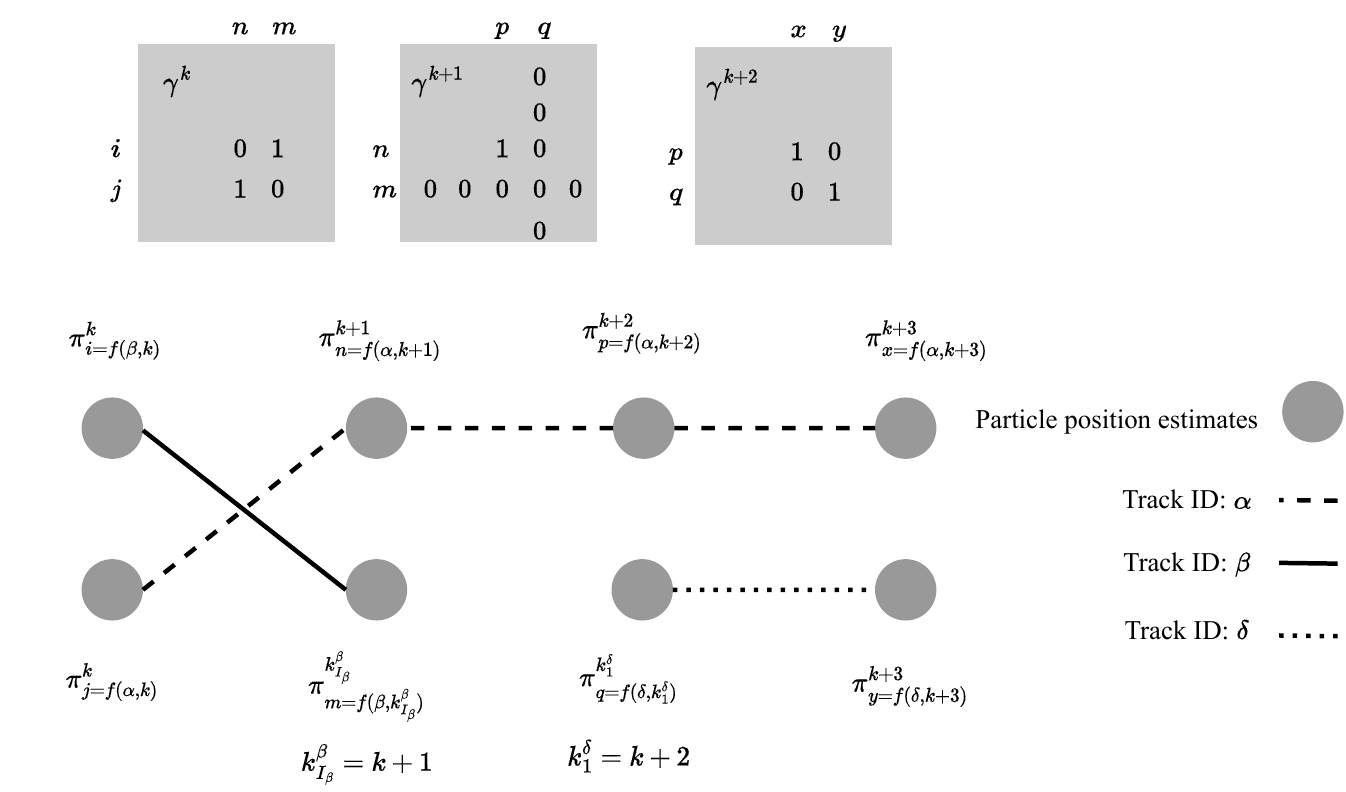}
  \caption{Reconstruction of optimal tracks from optimal transport maps.}
  \label{graph}
\end{figure}

Algorithm \ref{alg:ottrack} explains how we extract the optimal track $\Phi^{\ast}$.
\RestyleAlgo{ruled} 
\SetKwInput{KwInit}{Initialization}
\SetKwInput{KwInput}{Input}
\SetKwInput{KwReturn}{Return}
\SetKwInput{Kwalg}{PWD\_Tracks}
\SetKwComment{Comment}{$\triangleright$}{}
\begin{algorithm}[!ht]
\caption{
Optimal particle tracks.
}\label{alg:ottrack}
\Kwalg{
(
\begin{tabbing}
\hspace{1em} 
\= 
$\Pi=\left(\Pi^1, \cdots, \Pi^K\right)$, \hspace{1em} 
\= \quad
 \# \textit{Reconstructed particle positions.}
\\
\> 
$\left\{
\alpha_1, \cdots, \alpha_N
\right\}$, 
\> \quad
\# \textit{Discretized $\alpha$.}
\\
\>
$\epsilon$,
\> \quad
\# \textit{$\epsilon$-ball radius.}
\\
)
\end{tabbing}
}
\KwInit{
\\
\hspace{1em}
$\left\{\chi_q=\pi^1_q\right\}_{q\in A^1}$ and 
$\Phi^{\ast}=\left\{\chi_1, \cdots, \chi_{M^1}\right\}$.
\hspace{1em} \# Initialize optimal tracks.
}
 \For{$k=1; k\le K-1; k=k+1$}{  
  1. Compute 
 $
 \left(
 \alpha^{\ast k},
 \gamma^{\ast k}
 \right)=
 \textbf{Robust\_PWD\_HyperOptimal
 }
 \left(
 (\Pi^k, \Pi^{k+1})
 ,
 \left\{
\alpha_1, \cdots, \alpha_N
\right\},
\epsilon
 \right).
 $
 \\
 2.
 Find all $i$ and $j$ such that $\gamma^{\ast k}_{ij}=1$.
 Then from $\Phi^{\ast}$ find tracks $\{\chi_{q}\}$ that end with $\left\{\pi^{k}_{i}\right\}$, and extend these tracks $\left\{\chi_q\right\}$ by appending $\left\{\pi^{k+1}_{j}\right\}$.
 \\
 3.
 Find all $i$ such that $\gamma^{\ast k}_{i:}=0$. Then from $\Phi^{\ast}$ find tracks $\left\{\chi_q\right\}$ that end with $\left\{
 \pi^k_i
 \right\}$, and stop tracking them.
 \\
 4.
  Find all $i$ such that $\gamma^{\ast k}_{:j}=0$. Then start new tracks $
  \left\{
  \chi_q
  \right\}
  $ that start with $
  \left\{
  \pi^{k+1}_{j}
  \right\}
  $, and add them to $\Phi^{\ast}$.
}
 \KwReturn{
 $
 \left(
 \Phi^{\ast},
 \alpha^{\ast 1:K-1}
 \right)
 $
 \hspace{2em}
 \#
 (\textit{Optimal tracks, Optimal relative transport numbers}).
 }
\end{algorithm}

\section{Examples}
\label{section:examples}

In this section, we verify and validate our stochastic matching algorithm with synthetic 3D Burgers vortex and experimental aneurysm examples.
We implemented our code in Python using Python Optimal Transport package \citep{flamary2021pot}.

\subsection{Synthetic example: stochastic reconstruction in 3D Burgers vortex}
\label{section:syn_sto}

First, we validate and evaluate our method through a synthetic 3D Burgers vortex example described by \cite{webster2015laboratory} and \cite{elmi2021response} :
\begin{align*}
    v_r &= - \beta r, \\
    v_z &= 2 \beta z, \\
    v_{\theta} &= \frac{\Gamma}{2\pi r}
    \left(1- \exp\left(-\dfrac{\beta r^2}{2\nu}\right)\right).
\end{align*}
In the above velocity field, $\beta=20 \ \text{s}^{-1}$ is the strain rate, $\Gamma = 2\ \text{m}^2 \ \text{s}^{-1}$ is the vortex circulation, and $\nu=1\times 10^{-6} \ \text{kg}\ \text{m}^{-1}\text{s}^{-1}$ is the kinematic viscosity of the fluid.

We use yield $E_y$ and reliability $E_r$ rates in \qref{metric} to assess the matching performance.
To measure the tracking difficulty, we use the ratio of the average particle spacing $\Delta_0$, to the maximum particle displacement 
$u_{\text{max}}\Delta t$
\begin{align*}
    p=\frac{\Delta_0}{u_{\text{max}}\Delta t}.
\end{align*}
The smaller $p$ is, the harder the particle matching problem is.
The average particle spacing has the formula 
$
\Delta_0 =  \left(V/N\right)^{1/d}
$,
where $V$ is the volume of the reconstruction domain,  $N$ is the number of reconstructed particles, and $d$ is the dimension of the problem (2 or 3).
To increase $p$, we fix $\Delta_0$ but increase the frame time step $\Delta t$.
In all synthetic examples, we compute $p^k$, $E^k_y$ and $E^k_r$ at each frame $k$, and report their average values.

To prepare synthetic Gaussian particle position estimates,
we randomly initialize 3000 particles in a cylinder with $\rho = 36$mm and $z\in \left[-15, 15\right]$ mm.
The reconstruction domain is a 
30mm $\times$ 30mm $\times$ 30mm cube.
Then we run Runge-Kutta ODE solver to generate 0.05-second deterministic ground-truth Lagrangian tracks with a 0.0005-second time step.
This means there are 100 frame data in total.
Fig. \ref{ground_truth Burgers} plots the three views of the generated ground-truth Lagrangian tracks.
Then, we associate a Gaussian distribution to each deterministic particle position by assigning the deterministic position as the mean and adding random standard deviations $\sigma_{X, Y, Z}$ to the three coordinates.
The random standard deviations are samples from the inverse gamma distribution (with a shape parameter $\alpha=4$) scaled by a factor of 0.0001.
To change $p$, we downsample the data using $\Delta t = 0.0005, 0.0010, 0.0020, 0.0025$ and $0.0050$ seconds, which produce averaged $p= 4.89, 2.45, 1.23, 0.98$, and $0.50$ for each case. 

\begin{figure}[!h]
 
\includegraphics[scale=0.16]{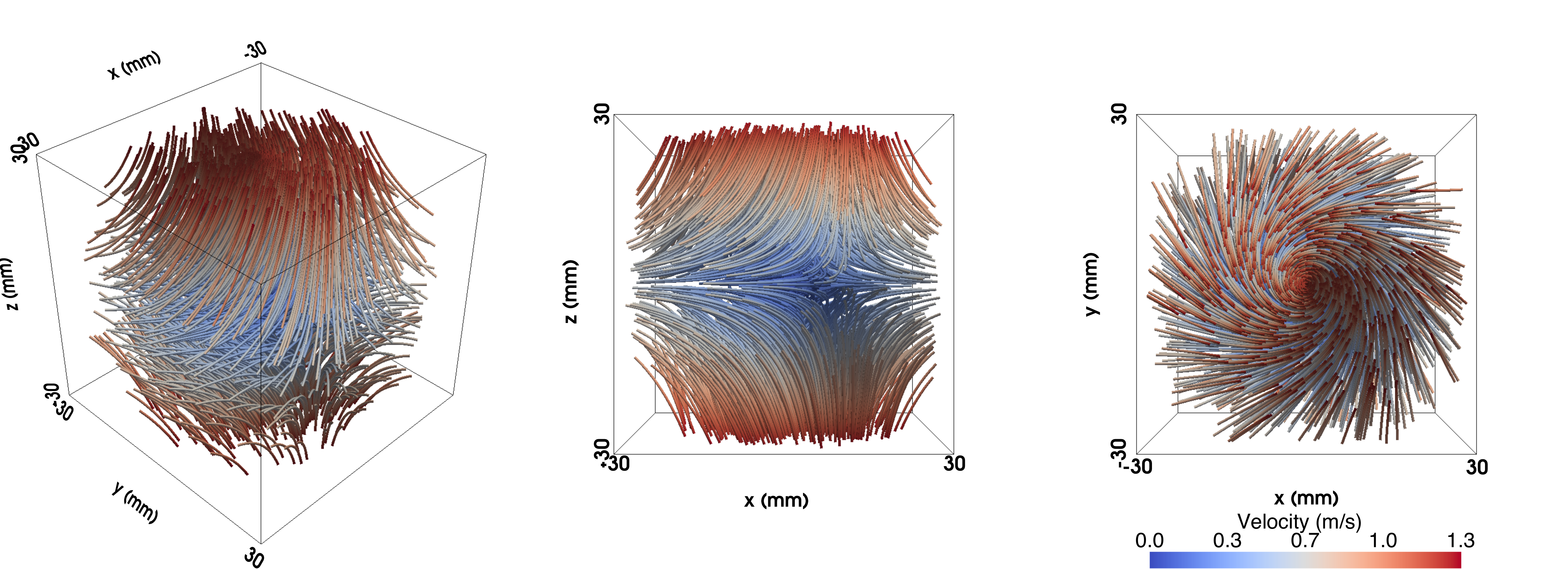}
  \caption{Ground-truth Lagrangian tracks for 3D Burgers vortex.}
  \label{ground_truth Burgers}
\end{figure}

To simulate particle position reconstruction errors, 
we gradually corrupt the aforementioned synthetic position reconstruction dataset.
The first one retains the aforementioned synthetic dataset.
This dataset is perfect (up to the standard deviations of the Gaussian estimates) because the means of Gaussian estimates coincide with the actual particle positions.
In the second case, we randomly replace $n\%$ of the perfect Gaussian estimates with $m\%$ of random Gaussian distributions in every frame data.
In this way, we introduce reconstruction errors due to missing and misidentified particles.
In the third case, we also jitter the means of remaining $(100-n)\%$ Gaussian estimates by noises.
This also considers the reconstruction noise.

\paragraph{Case 1: Perfect reconstruction.}

Fig. \ref{Burgers_nonoise_performance} shows the particle matching results given perfect reconstructed particle positions.
We observe, with a larger $\alpha$, the yield and reliability rates are higher.
The two rates are the highest when $\alpha$ is around 1.
This is, however, not the case when the reconstructed particle position estimates are not perfect.
The matching performance of the optimized $\alpha^{\ast}$ is close to the performance of $\alpha=1$.
This indicates that the hyperparameter optimization strategy successfully approximates the optimal $\alpha^{\ast}$. 
Fig. \ref{optimized_alpha_nonoise} plots the optimized $\alpha^{\ast k}$ at different time frames for the zero and first-order prediction methods under different ratios of average particle spacing to maximum particle displacements.
The ground truth $\alpha$ in the perfect reconstruction case is close to 1.
In general, the optimized relative transport numbers of the first-order prediction are greater than the zero-order prediction's.
Additionally, it is not surprising that as we increase the spacing-displacement ratio, both zero and first-order prediction algorithms achieve higher yield and reliability rates.

\begin{figure}[!h]
  \centering
\includegraphics[scale=0.45]{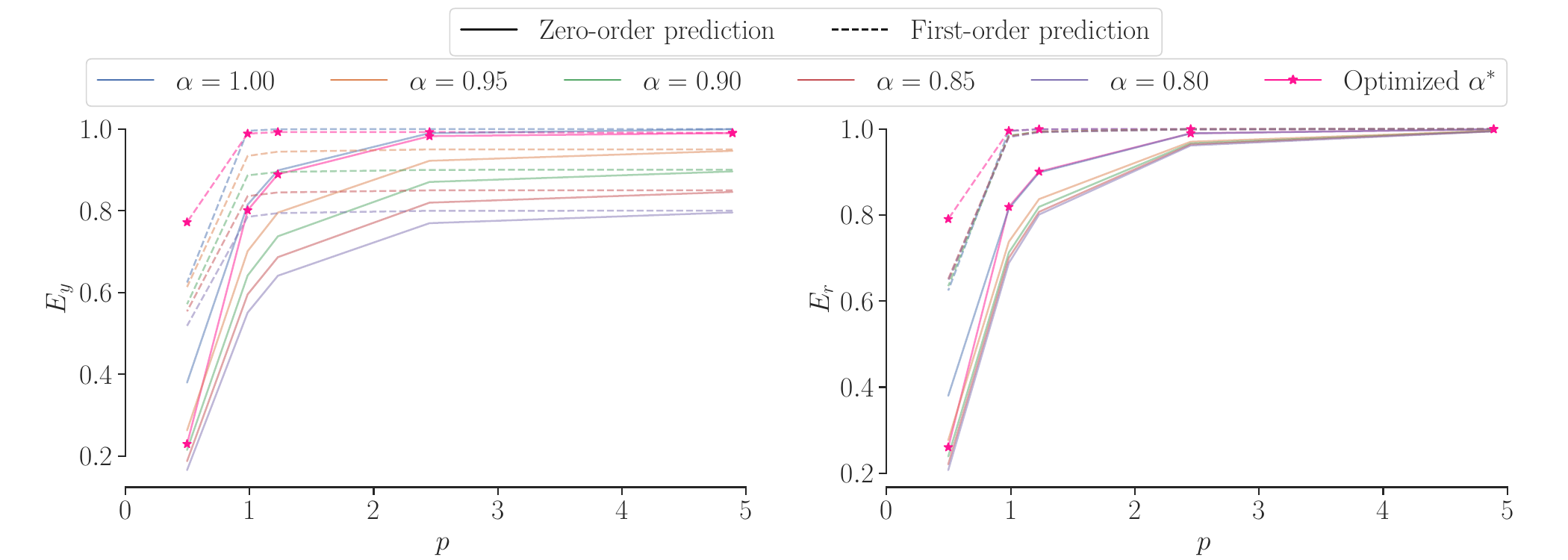}
  \caption{Case 1:
  mean yield and reliability rates for different spacing-displacement parameters $p$ and relative transport number $\alpha$.}
  \label{Burgers_nonoise_performance}

\vspace{1cm}

  \centering
\includegraphics[scale=0.44]{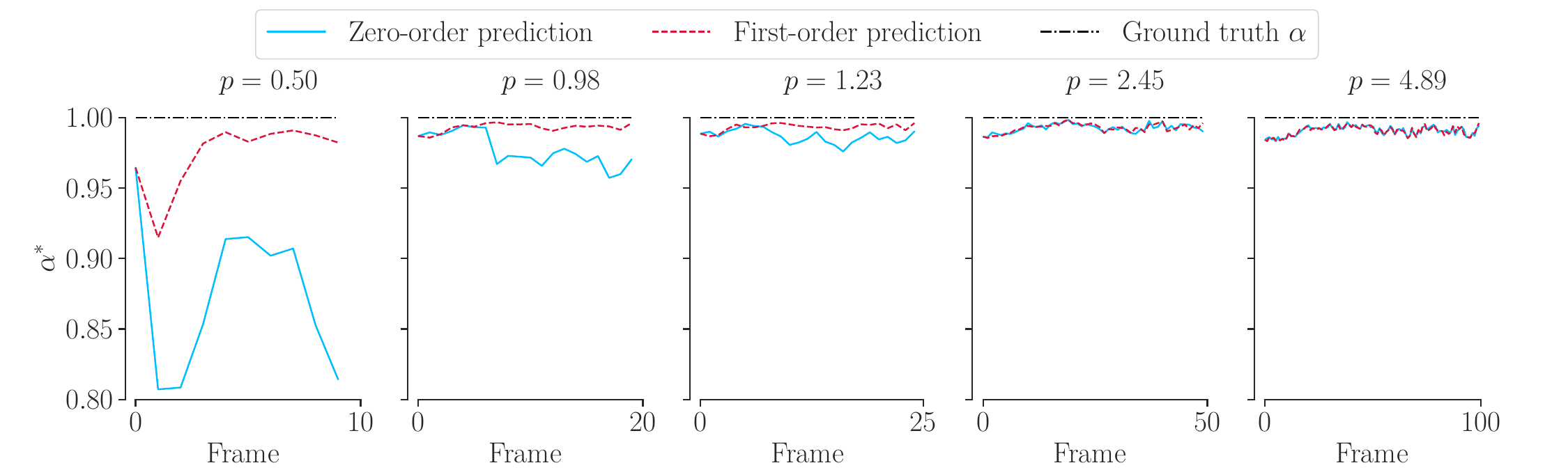}
  \caption{Case 1: optimized relative transport number $\alpha^{\ast k}$  at frame $k$ for varying $p$. }
  \label{optimized_alpha_nonoise}
\end{figure}

\paragraph{Case 2: Random replacement.}

Next, we examine the algorithm's performance after randomly replacing the perfect Gaussian particle reconstructions.
Specifically, we consider 36 different replacing levels using the pair $(n, m) \%$ in the  Cartesian product set $\left\{0, 2, 4, 6, 8, 10\right\}^2$.
Recall that the replacement level $(n, m)$ means we first randomly remove $n\%$ of perfect Gaussian distributions at each time frame.
Then, at each time frame, we add $m\%$ random Gaussian distributions whose means are drawn from a uniform distribution defined over the reconstruction volume and standard deviations are drawn from the
inverse gamma distribution (with a shape parameter $\alpha=4$) scaled by a factor of 0.0001.

\begin{figure}[!htb]
    \includegraphics[scale=0.32]{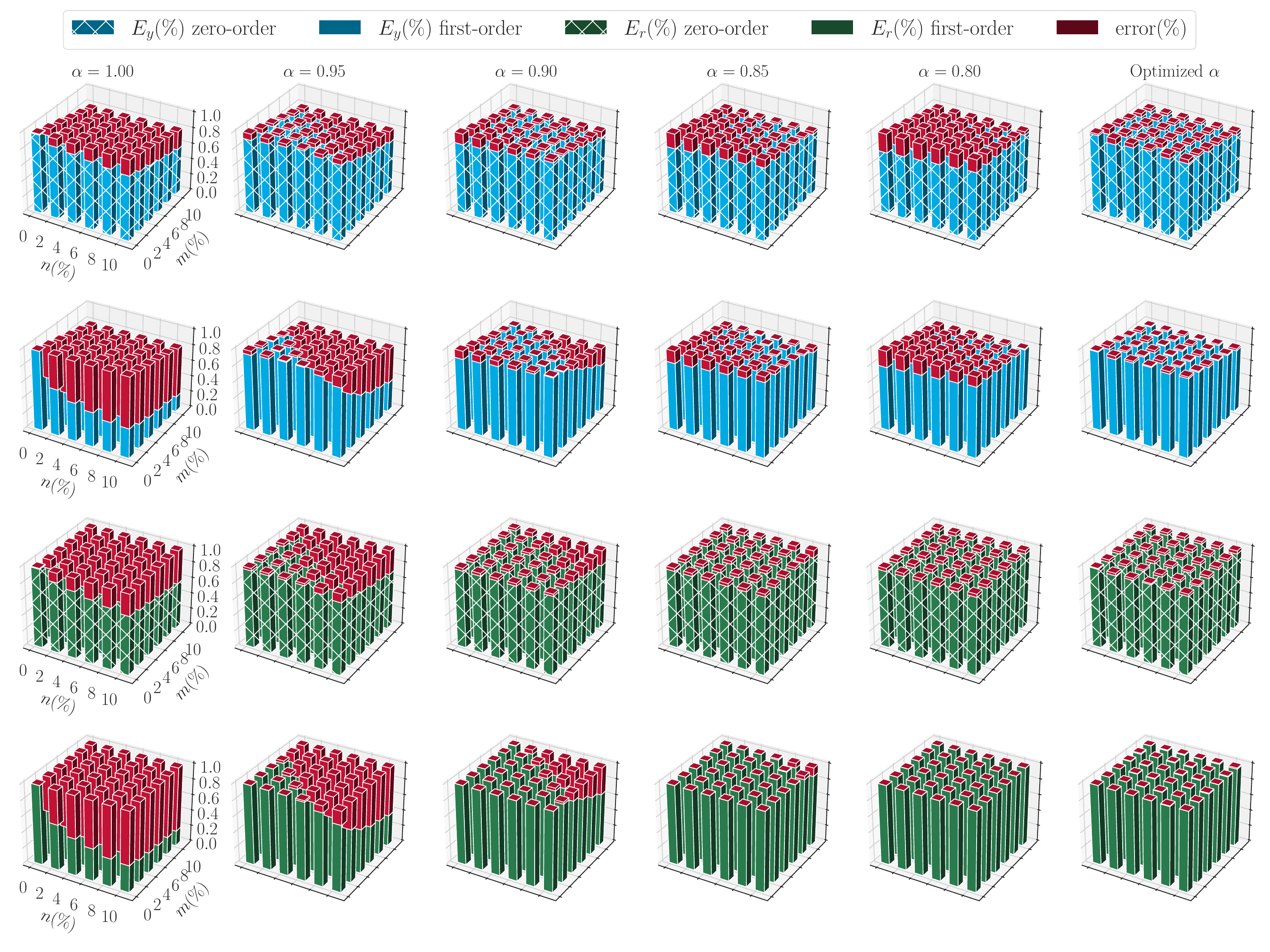}
  \caption{Case 2: mean yield and reliability rates and errors for different relative transport numbers $\alpha$,  and replacement percentages $n\%$ and $m\%$. Frame time step: 0.001 seconds.}
  \label{Burgers_replacement_performance}
\end{figure}

Fig. \ref{Burgers_replacement_performance} summarizes the particle matching results using the particle position collected every 0.001 seconds.
In each sub-figure, we plot yield or reliability rates against $n\%$ and $m\%$.
The first two rows plot the yield rates for zero and first-order prediction algorithms.
The last two rows plot the reliability rates for the two algorithms.
Each column summarizes the results using the same relative transport number $\alpha$, and its value decreases moving rightwards.
The last column plots the results of optimized $\alpha^{\ast}$.

To examine the matching performance of $\alpha$, we should check if it achieves the highest yield rate while maintaining a reliability rate of 100\%. 
We should mentioned that, due to the random replacement, the yield rate $E_r$ is approximately upper bounded by $(100-2n) \%$, i.e., 100, 96, 92, 88, 84, and 80\% for $n=0, 2, 4, 6, 8, 10$, respectively.
The reliability rates are always upper bounded by 100\%.
We plot the gaps of the yield and reliability rates to their upper bounds in red.

The plot shows that a larger $\alpha$, i.e., matching more particles, is not necessarily better when facing imperfect reconstructed particle position estimates.
For example, in the first column, when $\alpha=1$, the reliability rates are small.
In this case, we are overconfident about the position reconstruction results and intend to match misidentified particle positions between frames.
Smaller $\alpha$, e.g.,
0.85 (the fourth column) or 0.8 (the fifth column), sacrifices yield rates but is more robust.
Another noticeable result is that the first-order prediction method performs worse than the zero-order prediction algorithm for large $\alpha$.
For example, in the first column, the yield and reliability rates of the first-order method are lower than those of the zero-order method.
A potential explanation is that the accuracy of the first-order prediction relies on the correctness of the matched pairing in the last frame.
When reconstructed particle positions are not perfect, 
the errors propagate through the first-order prediction steps, which deteriorate the matching performance.
Therefore, optimizing the hyperparameter to identify the best relative transport number is essential in this case. 
In the last column, our optimized $\alpha^{\ast}$ achieves significant improvements since both yield and reliability rates are close to their upper bounds. 

\paragraph{Case 3: Random replacement plus jittering}
Last, we study the matching performance for the most challenging and practical case where the remaining reconstructed particle position estimates are also jittered after random replacement.
The jittering level is a fraction of the mean particle displacement $\bar{d} = \bar{u}\Delta t$ between two frames.
We experimented with six jittering levels $\delta \times \bar{d}$ for $\delta=0, 0.2, 0.4, 0.6, 0.8$, and $1.0$,
and perturbed the means of Gaussian by adding random samples following the uniform distribution defined in the cube $\left[-\delta \times \bar{d},\delta \times \bar{d}  \right]^3$.
The replacement levels $(n, m)$ are the same as case two, and we only report the result for the optimized relative transport number $\alpha^{\ast}$.

Fig. \ref{Burgers_replacement_noise_perform}
showcases the matching performance and error.
Each sub-figure plots either yield or reliability rates against $\delta$, $n\%$ and $m\%$ for optimized $\alpha^{\ast}$.
The first two rows plot the yield rates for zero and first-order prediction algorithms.
The last two rows plot the reliability rates for the two algorithms.
Each column summarizes the results using the same jittering level $\delta$, and its value increases moving rightwards.
In addition to the same conclusion in case 2,
this time, we observe for small jittering levels, such as $\delta=0.2$ and $0.4$, the first-order method still outperforms the zero-order method.
However, this is not the case when the jittering level becomes larger.
A plausible explanation is that the first-order method makes less accurate position predictions due to the larger random jitters. 
Therefore, choosing the order of position prediction depends on the accuracy of the particle position reconstruction.
\begin{figure}[!htb]
    \includegraphics[scale=0.32]{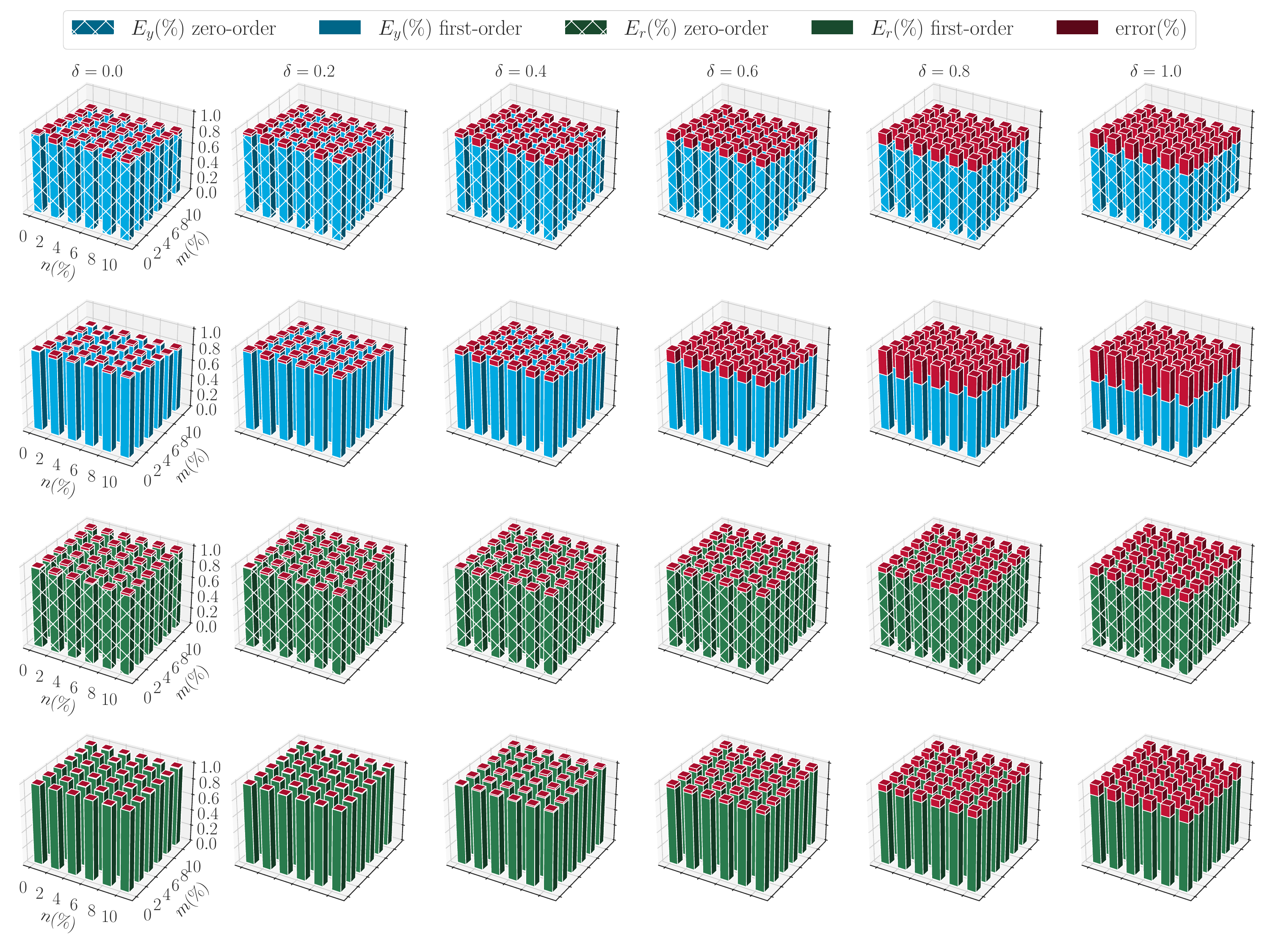}
  \caption{Case 3:  mean yield and reliability rates and errors using optimized relative transport number $\alpha^{\ast}$  for different jittering levels $\delta$, and replacement percentages $n\%$ and $m\%$. Frame time step: 0.001 seconds.}
  \label{Burgers_replacement_noise_perform}
\end{figure}

\subsection{Experimental example: cerebral aneurysm}

We demonstrate our tracking algorithm using the experimental cerebral aneurysm dataset. \citet{brindise2019multi} performed in-vitro experiments, where particle images were captured using four high-speed cameras with one center camera of $0^\circ$ angle from the geometry plane and the other three of about $30^{\circ}$ angles.
1216 $\times$ 1224 pixels time-resolved images were recorded at 2000 Hz.
Lagrangian tracks were identified using the shake-the-box \citep{schanz2016shake} algorithm.

For our purpose,
we first use the BVR algorithm~\citep{hansstochastic} to reconstruct the posterior distributions of particle positions at each frame using recorded images from the four cameras.
Then, we identify stochastic Lagrangian tracks using our first-order prediction PWD formulation of unbalanced optimal transport.
We showcase the reconstruction results in three plots.
In Fig. \ref{mean_compare_cerebral}, 
we plot the reconstructed particle position means using BVR in the left column, and the reconstructed track position and velocity means using PWD in the right column.
Notice that the velocity means are plotted at the location of the position means.
The identified tracks are visually reasonable compared to the reconstruted particle positions.
In Fig. \ref{standard_cerebral_position}, we plot the total standard deviations, i.e., the squared root of the sum of the variances of $x$, $y$, and $z$ coordinates.
All standard deviations are plotted at the location of the position means.
In the left column, we plot the result for the interior particles whose distances to the interior wall surface are greater than 1mm.
The right column shows the result for the particles near the interior wall surface.
We use pink color to emphasize high uncertainty.
As the color in the right column is slightly higher than the left column color, the uncertainty of the reconstructed particle positions near the wall exhibits a slightly higher uncertainty.
Usually, the reconstruction result shows more error and uncertainty in the regions that are closer to the wall, and our reconstruction result agrees with this fact.
Fig. \ref{standard_cerebral_velocity} plots the uncertainty of the velocities of the reconstructed Lagrangian tracks.
As before, the left and right columns plot the result for the tracks that are greater and within 1mm distance to the interior wall surface, respectively.
The velocity standard deviations are plotted at the location of the track means.
We can observe a similar result that the velocity uncertainty near the wall is slightly higher than the uncertainty in the interior region.

\begin{figure}[!htpb]
  \centering
\includegraphics[scale=0.19]{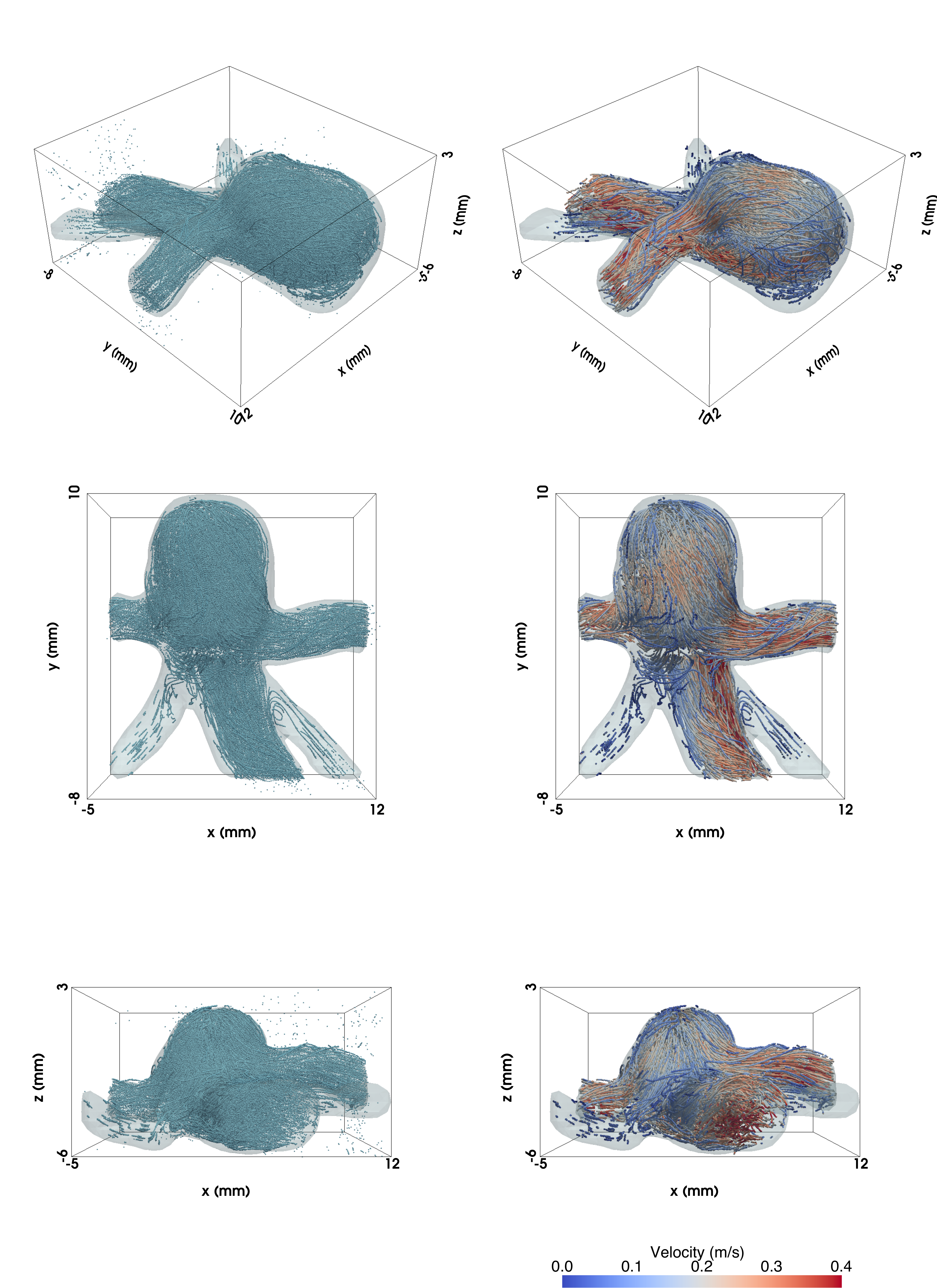}
  \caption{Cerebral aneurysm flow experimental data \citep{brindise2019multi}. Left column: reconstructed particle position means across all frames using BVR. Right column:
  reconstructed track position and velocity means using PWD.}
  \label{mean_compare_cerebral}
\end{figure}

\begin{figure}[!htpb]
  \centering
\includegraphics[scale=0.19]{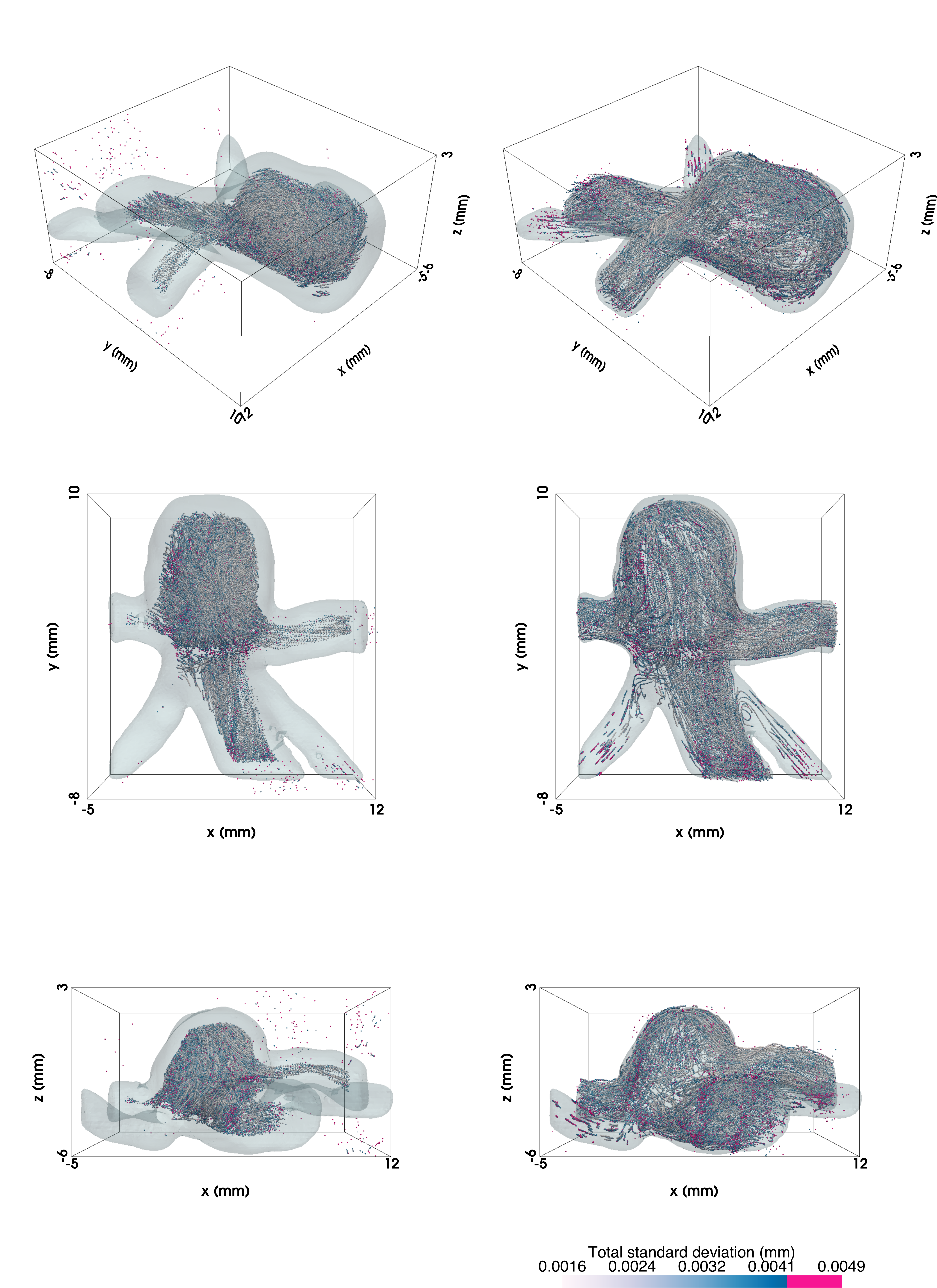}
  \caption{
  Cerebral aneurysm flow experimental data \citep{brindise2019multi}. Left column: total standard deviations of the reconstructed particle positions across all frames using BVR in the interior region (particle distances to the wall are greater than 1 mm). Right column: 
  total standard deviations near the wall (less than 1mm).
  }
  \label{standard_cerebral_position}
\end{figure}

\begin{figure}[!htpb]
  \centering
\includegraphics[scale=0.19]{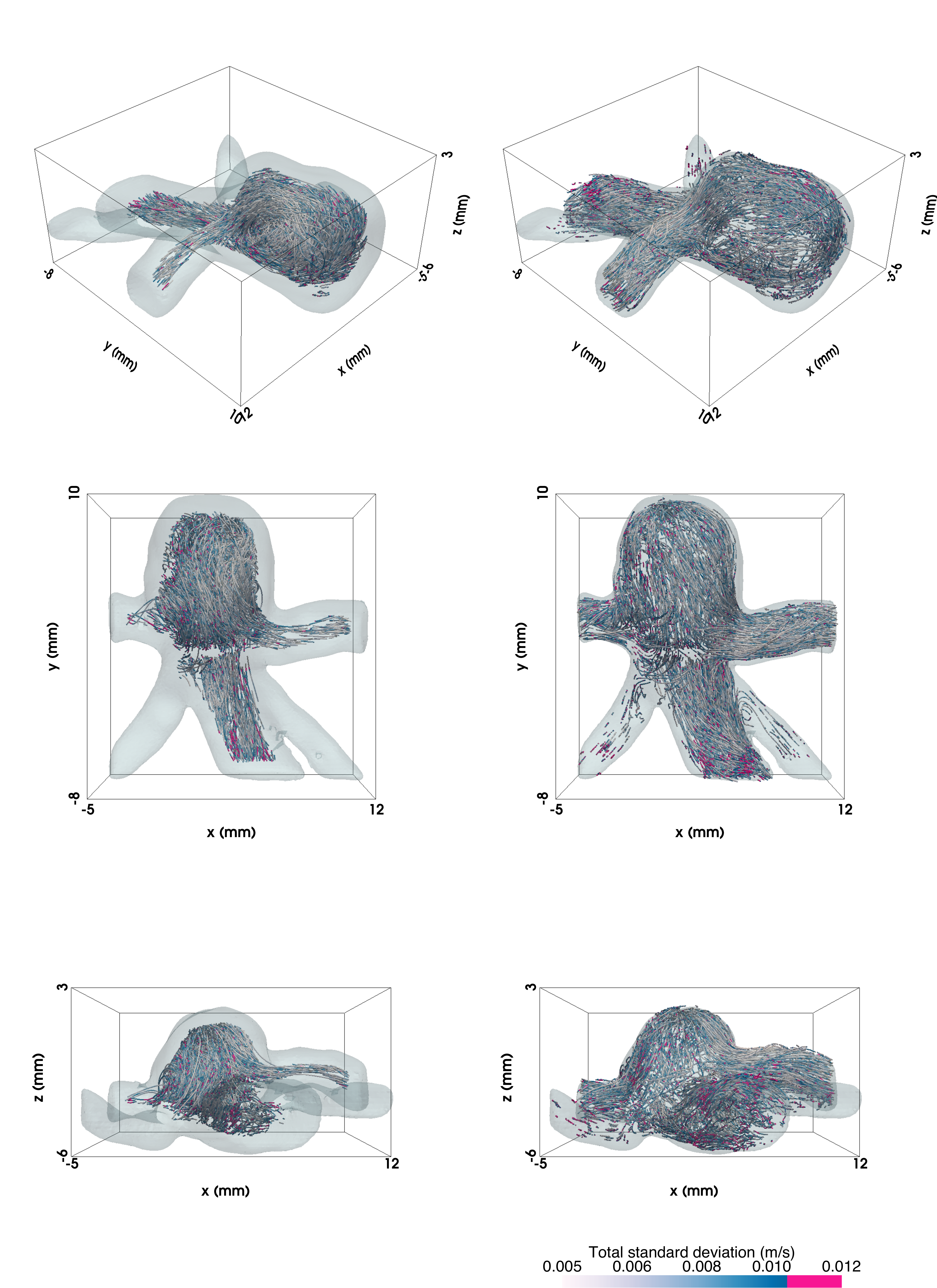}
  \caption{
  Cerebral aneurysm flow experimental data \citep{brindise2019multi}. Left column: total standard deviations of the reconstructed track velocities using PWD in the interior region (track distances to the wall are greater than 1 mm). Right column: 
  total standard deviations near the wall (less than 1mm).
  }
  \label{standard_cerebral_velocity}
\end{figure}

\section{Conclusions}
\label{section:conclusions}

In this paper, we develop a stochastic particle tracking method based on unbalanced optimal transport theory.
The method optimizes a transport plan to move the reconstructed particles (represented by Gaussian distributions) between two frames in the most efficient manner.
Since the number of reconstructed particles between two frames are rarely same, we employ an unbalanced optimal transport formulation.
We first formulate an integer unbalanced optimal transport problem to enforce one-to-one pairing of the Gaussian estimation of the particle positions.
We then relax this integer linear programming problem using the partial Wasserstein distance.
We prove the minimizer of the relaxed formulation is identical to the non-relaxed formulation.
This allows us to solve the original problem in polynomial time.
The unbalanced optimal transport formulation has a hyper-parameter, i.e., the total number of particles that need to be matched between two frames.
We develop a hyper-parameter optimization method to determine the most likely value for it.
This method first solves the unbalanced optimal transport with different hyper-parameters in parallel.
Then we find the largest hyper-parameter that keeps the estimated reliability rate close to 1.
We verify the developed stochastic particle tracking method using a synthetic 3D Burgers vortex example.
We evaluate our method's tracking performance in three cases of different difficulties.
The hyper-parameter optimization strategy significantly improves the yield and reliability rates of zero and first-order prediction methods.
Usually, the first-order prediction method achieves superior performance than the zero-order prediction method.
However, this is not true when the jittering level is high. 
Finally, we demonstrate and validate the developed stochastic particle tracking method using a cerebral aneurysm flow experimental example.
In addition to showcase the reconstructed position and velocity means,  we highlight the ability of our method to quantify the reconstruction uncertainty.
In particular, we plot the velocity uncertainties in the interior region and near the interior wall surface region, respectively.
The reconstruction result shows a slightly greater velocity uncertainty of the particles near the wall.

In summary, the developed method shows a convincing and promising capability to improve particle tracking accuracy and robustness.
The method also adds an additional advantage of quantifying reconstructed position and velocity uncertainties over the conventional method.
\appendix

\section{Proof of Theorem~\ref{theorem:1}}
\label{appendix:proof}
The essential idea of the proof is to show that PWD and IUOT constraints have the same extreme points.
Then, by the fundamental theorem of linear programming \citep{bertsekas2009convex},
linear programming attains a minimum at some extreme point and we conclude the theorem.

In the proof, we think $\gamma\in \mathbb{R}^{NM}$ as the vectorized transport plan matrix.
First we reformulate PWD constraints in Eqs.~(\ref{uot}) into the standard linear programming constraint format 
$
\left\{
\gamma
\middle\vert
A\gamma\le b, 
\gamma\ge 0
\right\}
$.
For an $N$ by $M$ dimensional transport plan matrix, we have:
\newenvironment{lbmatrix}[1]
  {\left[\array{@{}*{#1}{c}@{}}}
  {\endarray\right]}
\setcounter{MaxMatrixCols}{100}
\begin{align*}
    &A=
    \\
    &
    \begin{lbmatrix}{1}
        \\
        \\
        \\
        A_r\\
        \\
        \\
        \\
        \hdashline
        % \cmidrule(lr){1-2}
        \\
        \\
        A_c\\
        \\
        \\
        \hdashline
        % \cmidrule(lr){1-2}
        A_1
        \\
        \hdashline
        % \cmidrule(lr){1-2}
        A_{-1}
    \end{lbmatrix}
    = 
    \begin{bmatrix}
    1&1&\cdots&1 & &&& && &&& && &&&
    \\
     &&& & 1&1&\cdots&1 && &&& && &&&
    \\
    &&& & &&& &\ddots&  &&& && &&&
    \\
    &&& & &&& && 1&1&\cdots&1 && &&&
    \\
    &&& & &&& && &&& &\ddots& &&&
    \\
    &&& & &&& && &&& && 1&1&\cdots&1
    \\
    1 &&&& 1 &&&&& 1&&&&& 1 &&&
    \\
    & 1 &&&& 1 &&&&& 1 &&&&& 1 &&
    \\
    && \ddots &&&& \ddots &&\cdots &&& \ddots &&\cdots &&&\ddots &
    \\
    &&&1 &&&& 1 &&&&& 1 &&&&&1
    \\
    1&1&\cdots&1 & 1&1&\cdots&1 &\cdots& 1&1&\cdots&1 &\cdots & 1&1&\cdots&1
    \\
    -1&-1&\cdots&-1 &-1&-1&\cdots&-1 &\cdots& -1&-1&\cdots&-1 &\cdots &-1&-1&\cdots&-1
    \end{bmatrix}
    .
\end{align*}

The block matrix $A_r$ and $A_c$ correspond to the row and column mass constraints, respectively.
We call rows in $A_r$ source rows and rows in $A_c$ target rows.
We rewrite the equality constraint $\sum_{i=1}^{NM}\gamma_i = N_p$ into the two inequality constraints $\sum_{i=1}^{NM}\gamma_i  \le N_p$ and $-\sum_{i=1}^{NM}\gamma_i  \le -N_p$, which are represented by
rows $A_1$ and $A_{-1}$.
Then by the Hoffman and Kruskal theorem \citep{hoffman2010integral,conforti2014integer}, the polyhedron 
$
\left\{
\gamma \middle\vert
A\gamma \le b, \gamma\ge 0
\right\}
$
is integral, i.e., all vertices are integer, for every $b\in \mathbb{Z}^{NM}$ if and only if $A$ is totally unimodular.
A matrix is totally unimodular (TU) \citep{hoffman2010integral} if the determinant of its every square submatrix is 0, 1 or -1.
In our case, $b$ is an integer vector since $N_p$ is integer.
If $A$ is TU, then by the fundamental theorem of linear programming \citep{bertsekas2009convex}, the PWD problem guarantees to solve the binary linear programming problem. 
\begin{lemma}
    A is totally unimodular.
    \label{lemma1}
\end{lemma}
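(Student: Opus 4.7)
The plan is to apply the Ghouila-Houri characterization of total unimodularity: a matrix with entries in $\{-1,0,1\}$ is TU if and only if every subset $R$ of its rows admits a partition $R = R^{+} \cup R^{-}$ such that for every column $j$ the signed sum $\sum_{i\in R^{+}} a_{ij} - \sum_{i\in R^{-}} a_{ij}$ lies in $\{-1,0,1\}$. Since $A$ has entries only in $\{-1,0,1\}$ by construction, it suffices to exhibit such a partition for an arbitrary $R$.

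First I would fix notation for the row groups: the $N$ source rows in $A_r$ (each a block of $M$ consecutive ones), the $M$ target rows in $A_c$ (each with $N$ ones spaced $M$ apart), the all-ones row $A_1$, and the all-minus-ones row $A_{-1} = -A_1$. Given $R$, let $S \subseteq \{1,\ldots,N\}$ and $T \subseteq \{1,\ldots,M\}$ index its source and target rows. Because $A_1 + A_{-1} = 0$, whenever both are present in $R$ I assign them to the same side, contributing zero in every column and reducing to the case $R \subseteq \{A_r\text{-rows}\} \cup \{A_c\text{-rows}\}$. For this transportation sub-case, putting all $A_r$-rows in $R^{+}$ and all $A_c$-rows in $R^{-}$ yields column sums $[i\in S] - [j\in T] \in \{-1,0,1\}$ since each column contains at most one source and one target entry. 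If only $A_{-1} \in R$, swap the roles of $R^{+}$ and $R^{-}$ to reduce to the case $A_1 \in R, A_{-1}\notin R$.

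The substantive case is therefore $A_1 \in R, A_{-1}\notin R$. I place $A_1 \in R^{+}$ and split into three sub-cases. If $S \subsetneq \{1,\ldots,N\}$ and $T \subsetneq \{1,\ldots,M\}$, I place all $A_r$- and $A_c$-rows of $R$ into $R^{-}$; then the signed column sum at $(i,j)$ is $1 - [i\in S] - [j\in T] \in \{-1,0,1\}$. If $S = \{1,\ldots,N\}$, I instead put the $A_r$-rows in $R^{-}$ and the $A_c$-rows in $R^{+}$ (alongside $A_1$); the signed column sum becomes $-1 + [j\in T] + 1 = [j\in T] \in \{0,1\}$. The case $T = \{1,\ldots,M\}$ is handled symmetrically by exchanging the roles of rows and columns, and the doubly-full case $S = \{1,\ldots,N\}, T = \{1,\ldots,M\}$ is covered by either convention.

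Having verified the Ghouila-Houri condition in every case, I conclude that $A$ is totally unimodular. The main obstacle is precisely the edge case in which $S$ or $T$ exhausts its full index set: the naive partition that sends every marginal row to $R^{-}$ and keeps $A_1$ in $R^{+}$ produces a signed column sum of $-2$ whenever a column has both its source and target rows in $R$, and one must therefore flip one entire side of the bipartite partition to balance the contribution of $A_1$. Once this edge case is handled, the remaining verification is a short checklist, and the lemma follows immediately.
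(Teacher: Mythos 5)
Your proof is correct, but it takes a genuinely different route from the paper. You verify total unimodularity via the Ghouila--Houri row-partition criterion, whereas the paper works directly from the definition: it reduces to the submatrix $[A_r^T, A_c^T, A_1^T]^T$ (discarding $A_{-1}$ as a negated duplicate row), invokes the known TU property of the transportation constraint matrix $[A_r^T, A_c^T]^T$, and then performs a case analysis with cofactor expansions to show every square submatrix containing the all-ones row has determinant $0$, $\pm 1$. Your approach is shorter and arguably cleaner, since the key structural fact --- each column of the vectorized system meets exactly one source row, one target row, and the rows $A_1$, $A_{-1}$ --- makes the signed column sums immediate; the paper's approach is more elementary but requires tracking several submatrix configurations. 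One small remark: your closing discussion of the ``edge case'' rests on a miscalculation. With $A_1 \in R^{+}$ and all marginal rows in $R^{-}$, a column whose source and target rows both lie in $R$ gets signed sum $1 - 1 - 1 = -1$, not $-2$, so the uniform partition $R^{+} = \{A_1\}$, $R^{-} = \{\text{all marginal rows of } R\}$ already satisfies Ghouila--Houri in every sub-case and the split on whether $S$ or $T$ is full is unnecessary. This does not affect correctness --- the alternative partitions you exhibit for the full cases are also valid --- but the argument can be streamlined accordingly.
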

\begin{proof}[Proof of Lemma \ref{lemma1}]
    First, notice that we just need to prove that $A^{\prime}=\left[
    A_r^T, A_c^T, A_1^T
    \right]^T$ is TU by dropping the last row $A_{-1}$ since multiplying a row by -1 and removing a repeated row preserve TU \cite{Kuniga}.
    Our goal is to prove that every square submatrix of $A^{\prime}$ has a determinant of 0, 1, or -1.
    It is already known that the transport coefficient matrix $\left[
    A_r^T, A_c^T
    \right]^T$ is TU \citep{lpch}.
    So we just need to consider the square submatrix that contains the special row $A_1$.
    We follow the proof steps of Theorem 4 about the network flow problem in Professor Mikhail Lavrov's lecture notes \citep{Mikhail}.

    Suppose the selected square submatrix contains totally $k+m+1$ rows and columns, which are the $k$ source rows $s_1, \dots, s_k$ chosen from $A_r$,  $m$ column rows $t_1, \dots, t_m$ chosen from $A_c$, and the special row $A_r$.
    The columns
    $
    c_{1:k+m+1}
    $
    can be arbitrary.
    We denote this square submatirx 
    \begin{align*}
        A^{\prime}
        \left(
        s_{1:k}, t_{1:m}, A_1;
        c_{1:k+m+1}
        \right)
        =
        \begin{bmatrix}
            A_r\left(
            s_{1:k}; c_{1:k+m+1}
            \right)
            \\
            A_c\left(
            t_{1:m}; c_{1:k+m+1}
            \right)
            \\
            \mathbf{1}_{1\times (k+m+1)}
        \end{bmatrix}_{(k+m+1)\times (k+m+1)}.
    \end{align*}
    First, we do standard reduction to consider each row at least contains two 1s.
    If not, we can do cofactor expansion along this row to reduce it to a smaller square submatrix with the same determinant up to the sign.

    From the structure of the source rows $s_1, \cdots, s_k$, we observe every column in 
    $
    A_r\left(
    s_{1:k}; c_{1:k+m+1}
    \right)
    $ 
    can not contain more than one 1.
    Moreover, from the assumption every row contains at least two 1s, 
    so the submatrix 
    $
        A^{\prime}
        \left(
        s_{1:k}, t_{1:m}, A_1;
        c_{1:k+m+1}
        \right)
    $
    must contains at least $2k$ columns, namely $k+m+1\ge 2k$.
    We can use the same argument for the column rows $t_1, \cdots, t_m$ to deduce the inequality
    $
    k+m+1\ge 2m
    $.
    From these two inequalities, we have $m-1 \le k \le m+1$, so there are only three possible cases: $k=m$, $k=m-1$, and $k=m+1$.

    If $k=m+1$, the submatrix is $2k$ by $2k$.
    This means that the special row 
    $
    \mathbf{1}_{1\times (k+m+1)}
    $
    has $2k$ 1s.
    Since, from the assumption, each row  contains at least two 1s and the submatrix $A_r$ is $k$ by $2k$, we have each row of $A_r$ has exactly two 1s.
    Then it is easy to see the special row 
    $
    \mathbf{1}_{1\times (k+m+1)}
    $
    is the sum of the source rows in 
    $
    A_r\left(
    s_{1:k}; c_{1:k+m+1}
    \right)
    $
    .
    This means the special row is linearly dependent on the source rows.
    Hence, the determinant of
    $
    A^{\prime}
        \left(
        s_{1:k}, t_{1:m}, A_1;
        c_{1:k+m+1}
        \right)
    $
    is 0.
    If $k=m-1$, the submatrix is $2m$ by $2m$.
    We can use the same argument using the target rows 
    $
    A_c\left(
    t_{1:m}; c_{1:k+m+1}
    \right)
    $
    to see the the determinant is also 0.

    So we just need to prove the case of $k=m$ and the submatrix is $2k+1$ by $2k+1$.
    There are two cases.
    
    First, we consider the case of $k-1$ source rows
    in 
    $
    A_r
    \left(
    s_{1:k}; c_{1:k+m+1}
    \right)
    $
    containing two 1s and the remaining source row containing three 1s.
    If this is the case, the special row 
    $\mathbf{1}_{1\times(k+m+1)}$
    is linearly dependent on all source rows, and the determinant is 0.
    Similarly, if one of the target rows contains three 1s, the determinant is 0.

    The second case is all source and target rows contain two 1s.
    In this case, only two things can happen.
    The first is there is a zero column except for the special row. 
    Since column permutation preserves TU \citep{Kuniga},
    without loss of generality, we assume the last column is such a special column:
    \begin{align*}
        A^{\prime}
        \left(
        s_{1:k}, t_{1:m}, A_1;
        c_{1:k+m+1}
        \right)
        =
        \begin{bmatrix}
            A_r\left(
            s_{1:k}; c_{1:k+m}
            \right)
            &
            \mathbf{0}_{k\times 1}
            \\
            A_c\left(
            t_{1:m}; c_{1:k+m}
            \right)
            &
            \mathbf{0}_{m\times 1}
            \\
            \mathbf{1}_{1\times (k+m)}
            &1
            \end{bmatrix}_{(k+m+1)\times (k+m+1)}.
    \end{align*}
    In this case, we do the cofactor expansion along this special column using the 1 corresponding to the special row.
    Then we have
    \begin{align*}
        \operatorname{det}
        \left(
        A^{\prime}
        \right)
        =
        (-1)^{2(k+m+1)}
        \operatorname{det}
        \left(
        \begin{bmatrix}
        A_r\left(
        s_{1:k}; c_{1:k+m}
        \right)
        \\
        A_c\left(
        t_{1:m}; c_{1:k+m}
        \right)
        \end{bmatrix}
        \right).
    \end{align*}
    Notice that
    \begin{align*}
        \operatorname{det}
        \left(
        \begin{bmatrix}
        A_r\left(
        s_{1:k}; c_{1:k+m}
        \right)
        \\
        A_c\left(
        t_{1:m}; c_{1:k+m}
        \right)
        \end{bmatrix}
        \right)
        = 0, -1, \ \text{or} \ 1
    \end{align*} 
     since it is a submatrix of the transport coefficient matrix $
    \left[
    A_r^T, A_c^T
    \right]^T
    $, which is already known to be TU.
    The second possibility is that there is a column containing one 1 except for the special row.
    Without loss of generality,
    let's assume this special column is due to the source rows.
    We can also assume the special column is the last column and the row contains this special 1 is the first row:
    \begin{align*}
        A^{\prime}
        \left(
        s_{1:k}, t_{1:m}, A_1;
        c_{1:k+m+1}
        \right)
        =
        \begin{bmatrix}
            0,\cdots, 1, \cdots 0, & 1
            \\
            A_r\left(
            s_{2:k}; c_{1:k+m}
            \right)
            &
            \mathbf{0}_{k-1\times 1}
            \\
            A_c\left(
            t_{1:m}; c_{1:k+m}
            \right)
            &
            \mathbf{0}_{m\times 1}
            \\
            \mathbf{1}_{1\times (k+m)}
            &1
            \end{bmatrix}_{(k+m+1)\times (k+m+1)}.
    \end{align*}
    In this case, we do the cofactor expansion along this special column:
    \begin{align*}
        \operatorname{det}
        \left(
        A^{\prime}
        \right)
        =
        (-1)^{k+m+2}
        \operatorname{det}\left(
        \begin{bmatrix}
             A_r\left(
            s_{2:k}; c_{1:k+m}
            \right)
            \\
            A_c\left(
            t_{1:m}; c_{1:k+m}
            \right)
            \\
            \mathbf{1}_{1\times (k+m)}
        \end{bmatrix}
        \right)
        +
        (-1)^{2(k+m+1)}
        \operatorname{det}
        \left(
        \begin{bmatrix}
             0,\cdots, 1, \cdots 0
             \\
              A_r\left(
            s_{2:k}; c_{1:k+m}
            \right)
            \\
             A_c\left(
            t_{1:m}; c_{1:k+m}
            \right)
        \end{bmatrix}
        \right).
    \end{align*}
    Using the same argument as before, the first determinant is 0 since the all 1s row is linearly dependent on the target rows in $A_c(t_{1:m}; c_{1:k+m})$.
    For the second determinant, we have
    \begin{align*}
        \operatorname{det}
        \left(
        \begin{bmatrix}
             0,\cdots, 1, \cdots 0
             \\
              A_r\left(
            s_{2:k}; c_{1:k+m}
            \right)
            \\
             A_c\left(
            t_{1:m}; c_{1:k+m}
            \right)
        \end{bmatrix}
        \right)=
        0, -1, \ \text{or}\ 1
    \end{align*}
    since it is a submatrix of the transport coefficient matrix $
    \left[
    A_r^T, A_c^T
    \right]^T
    $.

    Together, we have proved $A$ is TU.
\end{proof}

Now, the proof of Theorem \ref{theorem:1} is straightforward.
\begin{proof}[Proof of Theorem \ref{theorem:1}] 
By the Hoffman and Kruskal theorem \citep{hoffman2010integral,conforti2014integer}, the polyhedron 
    $
    \left\{
    \gamma \middle\vert
    A\gamma \le b, \gamma\ge 0
    \right\}
    $
    is integral, i.e., all extreme points (vertices) are integers, for every $b\in \mathbb{Z}^{NM}$ if and only if $A$ is TU.
     $A$ is TU from Lemma \ref{lemma1}, so the solution of PWD is integral due to the fundamental theorem of linear programming.

    We notice that the PWD constraints
    $\sum_{j=1}^M\gamma_{ij}\leq 1$,
    $\sum_{i=1}^N\gamma_{ij}\leq 1$,
     and
     $\gamma_{ij} \ge 0$ imply
     $0 \le \gamma_{ij}\le 1$.
     This implies that the solution of PWD must satisfy $\gamma_{ij}=\left\{0, 1\right\}$.
     Since the objective function and the first three constraints of IUOT and PWD are the same, we conclude $\gamma^{\ast}_{IUOT}=\gamma^{\ast}_{PWD}$.
\end{proof}

\section{Stochastic Reconstruction: First-order Prediction Full-rank Covariance Matrix Case}
\label{appendix: full cov}

In this section, we discuss the first-order prediction for the full-rank covariance matrix case.

For $i\in A_{\text{matched}}$, 
linear extension has the form
\begin{align*}
    R^{k+1|k}_i = R^k_i + 
    \left(
    R^k_i - R^{k-1}_{l(i)}
    \right), 
    \
    \text{where} \
    R^k_i\sim \mathcal{N}\left(m^{k}_{i}, \Sigma^{k}_{i}\right) \ \text{and} \
    R^{k-1}_{l(i)} \sim 
    \mathcal{N}\left(m^{k-1}_{l(i)}, \Sigma^{k-1}_{l(i)}\right).
\end{align*}
We  assume
$
\mathcal{N}\left(m^{k}_{i}, \Sigma^{k-1}_{i}\right)
$ 
and 
$
\mathcal{N}\left(m^{k-1}_{l(i)}, \Sigma^{k-1}_{l(i)}\right)
$ 
are independent.
So 
$
R^{k+1|k}_i
$
follows
$
\mathcal{N}\left(m^{k+1|k}_{i}, \Sigma^{k+1|k}_{i}\right)
$, and we denote $\mathbb{C}$ the covariance operator to derive
\begin{align*}
    m^{k+1|k}_{i} &= 
    \mathbb{E}
    \left[
    2R^{k}_i- R^{k-1}_{l(i)}
    \right]
    =
    2m^k_i - m^{k-1}_{l(i)}, \
    \text{and}\
    \Sigma^{k+1|k}_i
    =\mathbb{C}
    \left[
    2R^{k}_i- R^{k-1}_{l(i)}
    \right]
    =
     4\Sigma^k_{i} + \Sigma^{k-1}_{l(i)}.
\end{align*}

For $i\in A_{\text{unmatched}, e}$, we have to estimate the translation, rotation and stretch of Gaussian ellipse using $B^{\ast}_{\epsilon}(\pi)$-particles.
For the translation, we use Wasserstein distance weighted predicted displacements of $B^{\ast}_{\epsilon}(\pi^k_i)$-particles to estimate the predicted mean position:
\begin{align*}
    m^{k+1|k}_i
    =
    m^k_i + 
     \dfrac{
     \sum_{j\in A_{\text{matched}, ei}}
    W_{2}\left(
    \mathcal{N}(m^k_j, \Sigma^k_j),
    \
    \mathcal{N}(m^{k}_i, \Sigma^{k}_i)
    \right)
    \left(
    m^k_j -m^{k-1}_{l(j)}
    \right)
    }{
     \sum_{j\in A_{\text{matched}, ei}}
   W_{2}\left(
    \mathcal{N}(m^k_j, \Sigma^k_j),
    \
    \mathcal{N}(m^{k}_i, \Sigma^{k}_i)
    \right)
    }.
\end{align*}
The covariance matrix accounts for the rotation and stretch.
First, we study the covariance transformation of a single $B^{\ast}_{\epsilon}(\pi)$-particle $\pi^k_j$.
The covariance matrix $\Sigma^k_j$ has the eigendecomposition 
$\Sigma^k_j = U_j^k\Lambda^k_j {U_j^k}^T$, where we organize the eigenvalues in the descending order.
Similarly, the covariance matrix $\Sigma^{k+1}_{l(j)}$ of the matched pair of $\pi^k_j$ has the eigendecomposition
$\Sigma^{k+1}_{l(j)} = U_{l(j)}^{k+1}\Lambda^{k+1}_{l(j)} {U_{l(j)}^{k+1}}^T$
The  transformation from $\Sigma^k_j$ to $\Sigma^{k+1|k}_j$ can be summarized by the rotation 
$
Q\left(
\alpha_j^{k+1|k},
\beta_j^{k+1|k},
\gamma_j^{k+1|k}
\right)
$ 
and diagonal scaling $D_j^{k+1|k}$ matrices:
\begin{align*}
\Sigma^{k+1|k}_j = Q\left(
\alpha_j^{k+1|k},
\beta_j^{k+1|k},
\gamma_j^{k+1|k}
\right) U_j^k D_j^{k+1|k} \Lambda^k_j {U_j^k}^T Q^T\left(
\alpha_j^{k+1|k},
\beta_j^{k+1|k},
\gamma_j^{k+1|k}
\right).
\end{align*}
The rotation matrix can be computed by $Q = U^{k+1|k}_{l(j)}{U^{k}_j}^T$.
We specify the rotation is performed in the sequence of about axes $x$, $y$, and $z$ with the Euler angles $\alpha_j^{k+1|k},
\beta_j^{k+1|k}$, and
$\gamma_j^{k+1|k}$.
So the rotation matrix can be decomposed into the product of three Givens rotation \citep{golub2013matrix}:
\begin{align*}
    &Q\left(
    \alpha_j^{k+1|k},
    \beta_j^{k+1|k},
    \gamma_j^{k+1|k}
    \right)
    \\
    =&
    \begin{bmatrix}
    \cos{\gamma_j^{k+1|k}} & -\sin{\gamma_j^{k+1|k}} & 0
    \\
    \sin{\gamma_j^{k+1|k}} & \cos{\gamma_j^{k+1|k}} & 0
    \\
    0 & 0 & 1
    \end{bmatrix}
    \begin{bmatrix}
        \cos{\beta_j^{k+1|k}} & 0 & \sin{\beta_j^{k+1|k}}
        \\
        0 & 1 & 0 
        \\
        -\sin{\beta_j^{k+1|k}} & 0 & \cos{\beta_j^{k+1|k}}
    \end{bmatrix}
    \begin{bmatrix}
        1 & 0 & 0
        \\
        0 & \cos{\alpha_j^{k+1|k}} & -\sin{\alpha_j^{k+1|k}}
        \\
        0 & \sin{\alpha_j^{k+1|k}} & \cos{\alpha_j^{k+1|k}}
    \end{bmatrix}.
\end{align*}
The diagonal scaling matrix is computed by dividing the diagonal elements in $\Lambda^{k+1}_{l(j)}$ by the diagonal elements in $\Lambda^{k}_{j}$.

Then we use Wasserstein distance weighted rotation angles and scaling factors of $B^{\ast}_{\epsilon}(\pi)$-particles to make the covariance prediction for particle $\pi^k_i$:
\begin{align*}
    \theta^{k+1|k}_{i, B^{\ast}_{\epsilon}(\pi)}
    &=
    \dfrac{
             \sum_{j\in A_{\text{matched}, ei}}
            W_{2}\left(
            \mathcal{N}(m^k_j, \Sigma^k_j),
            \
            \mathcal{N}(m^{k}_i, \Sigma^{k}_i)
            \right)
            \theta^{k+1|k}_j
            }{
             \sum_{j\in A_{\text{matched}, ei}}
           W_{2}\left(
            \mathcal{N}(m^k_j, \Sigma^k_j),
            \
            \mathcal{N}(m^{k}_i, \Sigma^{k}_i)
            \right)
            },\quad \text{for}\ \theta = \alpha, \beta, \text{and}\ \gamma,
            \\
    D^{k+1|k}_{i, B^{\ast}_{\epsilon}(\pi)}
    &=
    \dfrac{
             \sum_{j\in A_{\text{matched}, ei}}
            W_{2}\left(
            \mathcal{N}(m^k_j, \Sigma^k_j),
            \
            \mathcal{N}(m^{k}_i, \Sigma^{k}_i)
            \right)
            D^{k+1|k}_j
            }{
             \sum_{j\in A_{\text{matched}, ei}}
           W_{2}\left(
            \mathcal{N}(m^k_j, \Sigma^k_j),
            \
            \mathcal{N}(m^{k}_i, \Sigma^{k}_i)
            \right)
            }.
\end{align*}
For $i\in A_{\text{unmatched}, n}$, we keep the zero-order prediction.

In summary, for the stochastic reconstruction, we have
\begin{align*}
    m^{k+1|k}_i
    =
    \begin{cases}
        m^{k}_i + \left(
        m^k_i - m^{k-1}_{l(i)}
        \right), &\forall i \in A_{\text{matched}},
        \\
        m^k_i + 
         \dfrac{
         \sum_{j\in A_{\text{matched}, ei}}
        W_{2}\left(
        \mathcal{N}(m^k_j, \Sigma^k_j),
        \
        \mathcal{N}(m^{k}_i, \Sigma^{k}_i)
        \right)
        \left(
        m^k_j -m^{k-1}_{l(j)}
        \right)
        }{
         \sum_{j\in A_{\text{matched}, ei}}
       W_{2}\left(
        \mathcal{N}(m^k_j, \Sigma^k_j),
        \
        \mathcal{N}(m^{k}_i, \Sigma^{k}_i)
        \right)
        }, & \forall i \in A_{\text{unmatched}, e},
        \\
        m^k_i, & \forall i \in A_{\text{unmatched}, n},
    \end{cases}
\end{align*}
and
\begin{align*}
    &\Sigma^{k+1|k}_i
    \\
    =&
    \begin{cases}
         4\Sigma^k_{i} + \Sigma^{k-1}_{l(i)},
         & \forall i \in A_{\text{matched}},
         \\
         Q\left(
        \alpha_{i, B^{\ast}_{\epsilon}(\pi)}^{k+1|k},
        \beta_{i, B^{\ast}_{\epsilon}(\pi)}^{k+1|k},
        \gamma_{i, B^{\ast}_{\epsilon}(\pi)}^{k+1|k} 
        \right) 
        U_i^k 
        D^{k+1|k}_{i, B^{\ast}_{\epsilon}(\pi)}
        \Lambda^k_i 
        {U_i^k}^T Q^T\left(
        \alpha_{i, B^{\ast}_{\epsilon}(\pi)}^{k+1|k},
        \beta_{i, B^{\ast}_{\epsilon}(\pi)}^{k+1|k},
        \gamma_{i, B^{\ast}_{\epsilon}(\pi)}^{k+1|k}
        \right), &\forall i \in A_{\text{unmatched}, e},
        \\
        \Sigma^k_i, & \forall i \in A_{\text{unmatched}, n}.
    \end{cases}
\end{align*}

We only cover the theoretical formula.
The numerical implementation is computationally expansive since it requires performing nested eigenvalue decomposition to compute the squared 2-Wasserstein distance in \qref{normal_was_full} and the rotation angles.

\bibliographystyle{unsrtnat}
% \bibliography{references} 

\end{document}